\newtheorem{theorem}{Theorem}
\newtheorem{lemma}[theorem]{Lemma}
\newtheorem{claim}[theorem]{Claim}
\newtheorem{corollary}[theorem]{Corollary}
\renewenvironment{proof}{\vspace{-0.05in}\noindent{\bf Proof:}}%
        {\hspace*{\fill}$\Box$\par}
\newenvironment{proofof}[1]{\smallskip\noindent{\bf Proof of #1:}}%
        {\hspace*{\fill}$\Box$\par}
\newenvironment{proofsketch}{\vspace{-0.1in}\noindent{\bf Proof sketch:}}%
        {\hspace*{\fill}$\Box$\par}
\newcommand{\eps}{\epsilon}
\newcommand{\Ex}{\mathbb{E}}
\newcommand{\R}{{\mathbb{R}}}
\newcommand{\one}{\mathbf{1}}
\newcommand{\zero}{\mathbf{0}}
\newcommand{\bc}{{\bf c}}
\newcommand{\bd}{{\bf d}}
\newcommand{\bx}{{\bf x}}
\newcommand{\by}{{\bf y}}
\newcommand{\bz}{{\bf z}}
\newcommand{\bw}{{\bf w}}
\newcommand{\bu}{{\bf u}}
\newcommand{\bv}{{\bf v}}
\newcommand{\bp}{{\bf p}}
\newcommand{\bo}{{\bf o}}
\def\OPT{\mathrm{OPT}}
\DeclareMathOperator{\argmax}{argmax}
\def\({\left(}
\def\){\right)}
\def\etal{\emph{et al.}\xspace}
\begin{document}
\title{Constrained Submodular Maximization: Beyond $1/e$}
\author{Alina Ene\thanks{Department of Computer Science, Boston University, {\tt aene@bu.edu}. This work was done in part while the author was with the Computer Science department at the University of Warwick and a visitor at the Toyota Technological Institute at Chicago.}
\and
Huy L. Nguy\~{\^{e}}n\thanks{College of Computer and Information Science, Northeastern University, {\tt hlnguyen@cs.princeton.edu}. This work was done in part while the author was with the Toyota Technological Institute at Chicago.}
}
\begin{titlepage}

\maketitle

\begin{abstract}
	In this work, we present a new algorithm for maximizing a non-monotone submodular function subject to a general constraint. Our algorithm finds an approximate fractional solution for maximizing the multilinear extension of the function over a down-closed polytope. The approximation guarantee is $0.372$ and it is the first improvement over the $1/e$ approximation achieved by the unified Continuous Greedy algorithm [Feldman \etal, FOCS 2011]. 
\end{abstract}

\end{titlepage}

\newpage
\section{Introduction}

A set function $f:2^{V}\rightarrow \R$ is {\em submodular} if for every $A, B \subseteq V$, we have $f(A)+f(B) \ge f(A\cup B) + f(A\cap B)$. Submodular functions naturally arise in a variety of contexts, both in theory and practice. From a theoretical perspective, submodular functions provide a common abstraction of cut functions of graphs and digraphs, Shannon entropy, weighted coverage functions, and log-determinants. From a practical perspective, submodular functions are used in a wide range of application domains from machine learning to economics. In machine learning, it is used for document summarization~\cite{LinB10}, sensor placement~\cite{KrauseSG08}, exemplar clustering~\cite{GomesK10}, potential functions for image segmentation~\cite{JegelkaB11}, etc. In an economics context, it can be used to model market expansion~\cite{DughmiRS12}, influence in social networks~\cite{Kempe2003}, etc. The core mathematical problem underpinning many of these applications is the meta problem of maximizing a submodular objective function subject to some constraints, i.e., $\max_{S\in \mathcal{S}} f(S)$ where $\mathcal{S}$ is a down-closed family of sets\footnote{A family of sets $\mathcal{S}$ is down-closed if for all $A\subset B$, if $B\in \mathcal{S}$ then $A\in \mathcal{S}$.}.

A common approach to this problem is a two-step framework based on the multilinear extension $F$ of $f$, a continuous function that extends $f$ to the domain $[0,1]^V$. The program first (1) maximizes $F(\bx)$ subject to $\bx\in C$ where $C$ is a convex relaxation of $\mathcal{S}$, and then (2) rounds $\bx$ to an integral solution in $\mathcal{S}$. This paradigm has been very successful and it has led to the current best approximation algorithms for a wide variety of constraints including cardinality constraints, knapsack constraints, matroid constraints, etc. The contention resolution scheme framework of Chekuri \etal~\cite{ChekuriVZ14} gives a generic way to perform step 2 of the program. Thus, many recent works~\cite{FeldmanNS11,ChekuriJV15} focus on improving step 1, as it immediately leads to improved algorithms when combined with the known rounding schemes. Feldman \etal~\cite{FeldmanNS11} give a beautiful generalization of the continuous greedy algorithm~\cite{Vondrak2008}, achieving an $1/e$ approximation for step 1 for any down-closed and solvable polytope\footnote{A polytope $C$ is solvable if there is an oracle for optimizing linear functions over $C$, i.e., for solving $\max_{\bc\in C} \langle \bv, \bc\rangle$ for any vector $\bv$.} $C$. While it is known that the continuous greedy algorithm is optimal when we restrict to monotone functions\footnote{A function $f: 2^{V}\rightarrow \R$ is monotone if $f(A) \leq f(B)$ for all $A \subseteq B$.}, it is not known if the $1/e$ approximation of the extension by~\cite{FeldmanNS11} is optimal. Recently, the work of Buchbinder \etal~\cite{Buchbinder2014a} shows that, for the special case of a \emph{cardinality} constraint, it is possible to beat $1/e$. However, this result still leaves open the possibility that an $1/e$ approximation is best possible for a harder constraint. This possibility is consistent with our current knowledge: the best known hardness for a cardinality constraint is 0.491 while the best known hardness for a matroid constraint is 0.478~\cite{Gharan2011}, suggesting that the matroid constraint may be strictly harder to approximate. In this paper, we rule out this possibility and show that it is possible to go beyond the $1/e$ barrier in the same generic setting as considered by~\cite{FeldmanNS11}.

\begin{theorem}
  Let $f$ be a non-negative, non-monotone, submodular function $f$. Let $F$ be the multilinear extension of $f$. Let $C$ be a down-closed and solvable polytope. There is an efficient algorithm that constructs a solution $\bx \in C$ such that $F(\bx) \geq 0.372 \cdot F(\OPT)$, where $\OPT$ is an optimal integral solution to the problem $\max_{\bx \in C \cap \{0, 1\}^V} F(\bx)$.
\iffalse
There is an efficient algorithm that given a non-negative, non-monotone, submodular function $f$, its multilinear extension $F$, and a down-closed and solvable polytope $C$, achieves an approximation of $0.372$ for the problem $\max_{\bx\in C} F(\bx)$.
\fi
\end{theorem}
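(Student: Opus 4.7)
The plan is to design an efficient continuous-time algorithm that maintains a trajectory $\bx(t) \in C$ for $t \in [0,1]$ with $\bx(0) = \zero$, and whose final point $\bx(1)$ satisfies $F(\bx(1)) \geq 0.372 \cdot F(\OPT)$. The natural starting point is the measured continuous greedy of Feldman~\etal~\cite{FeldmanNS11}, which updates $\dot \bx(t) = (\one - \bx(t)) \odot \bv(t)$ with $\bv(t) \in \argmax_{\bv \in C} \langle \bv, \nabla F(\bx(t))\rangle$. Its analysis tracks two quantities, $F(\bx(t))$ and the hybrid benchmark $F(\bx(t) \vee \one_{\OPT})$, obtaining a decay estimate $F(\bx(t) \vee \one_{\OPT}) \geq (1-t)\,F(\OPT)$ from non-negativity together with a growth bound on $F(\bx(t))$; combined, these yield the $1/e$ approximation at $t=1$. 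To do better, I need to exploit additional structure that standard measured continuous greedy leaves on the table, without sacrificing the generality of a down-closed solvable polytope.

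My approach is a two-phase scheme parameterized by a threshold $t_0 \in (0,1)$ to be tuned. In phase one, over $[0,t_0]$, I would run standard measured continuous greedy, obtaining a snapshot $\bx(t_0)$ that already covers a substantial fraction of $\OPT$'s mass. In phase two, over $[t_0, 1]$, I would switch to a modified update rule whose direction $\bv(t)$ maximizes a linear functional combining $\nabla F(\bx(t))$ with a correction term that penalizes coordinates already heavily used in phase one and favors coordinates likely to lie in $\OPT \setminus \mathrm{supp}(\bx(t_0))$. The motivation is that, after phase one, the residual problem becomes more monotone-like on the unexplored coordinates, and the hybrid benchmark provides a strictly stronger reference point than the raw $F(\OPT)$ used in the original analysis. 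Efficiency of each phase reduces to solving linear optimization problems over $C$ (which the solvability assumption provides) plus standard discretization of the ODE.

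The main obstacle will be the analysis of phase two. I would need to derive an improved differential inequality of the form $\frac{d}{dt} F(\bx(t)) \geq \alpha(t)\, F(\OPT) - \beta(t)\, F(\bx(t))$ with strictly better $\alpha,\beta$ than phase one, using submodularity to argue that the benefit of steering away from coordinates already fractionally taken outweighs the loss from no longer maximizing $\langle \bv, \nabla F(\bx(t))\rangle$ in the naive sense. Solving the resulting two-piece ODE and optimizing over $t_0$ (and over the weighting in the phase-two objective) should yield a bound strictly above $1/e$, targeted at $0.372$. The delicate points are (i) ensuring the modified direction remains feasible in the down-closed polytope $C$, (ii) quantitatively controlling the hybrid benchmark $F(\bx(t) \vee \one_{\OPT})$ through phase two, and (iii) verifying that the worst-case trajectory of the coupled inequalities indeed attains $0.372$ rather than some smaller constant.
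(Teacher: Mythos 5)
There is a genuine gap here: your plan is a single-trajectory modification of measured continuous greedy, and its central step --- an improved differential inequality $\frac{d}{dt}F(\bx(t)) \geq \alpha(t)F(\OPT) - \beta(t)F(\bx(t))$ for a phase-two direction that ``favors coordinates likely to lie in $\OPT$'' --- is not justified and is the entire difficulty. The algorithm does not know $\OPT$, so the proposed correction term is not well-defined, and no argument is given for why steering away from heavily-used coordinates recovers more than the standard bound $\langle \nabla F(\bx)\circ(\one-\bx),\bv\rangle \geq F(\bx\vee\one_{\OPT}) - F(\bx)$. The obstruction is that when the LP value $\max_{\bc\in C}\langle \nabla F(\bx)\circ(\one-\bx),\bc\rangle$ is barely above this benchmark, the gradient simply does not certify any extra progress along the greedy trajectory; one needs a \emph{second, structurally different} candidate solution to exploit that situation, and your proposal produces only one output point $\bx(1)$.

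The paper's route is different in both phases and, crucially, hedges between two solutions. It dampens the \emph{first} phase by adding the constraint $\|\bc\|_\infty\leq\alpha$ to the LP (slowing the growth of $\|\bx\|_\infty$ and hence preserving the residual value $(1-\|\bx\|_\infty)F(\OPT)$), then finishes with \emph{standard} continuous greedy --- the opposite order from yours. In parallel, at the switching time $\theta$ it computes the undampened direction $\bp=\argmax_{\bc\in C}\langle\nabla F(\bx^{(\theta)})\circ(\one-\bx^{(\theta)}),\bc\rangle$ and runs Double Greedy over the box $\{\bc: \zero\leq\bc\leq\bp\}$, which is feasible because $C$ is down-closed. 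The key lemma is that if the dampened LP value is close to its trivial lower bound $\alpha\langle\nabla F(\bx^{(\theta)})\circ(\one-\bx^{(\theta)}),\OPT\rangle$, then the test vector $\widehat{\bv}=(1-\alpha)((\one-\bp)\wedge\OPT)+\alpha\bp$ (which satisfies $\|\widehat{\bv}\|_\infty\leq\alpha$ when $\alpha\geq 1/2$) forces $\langle\nabla F(\bx^{(\theta)})\circ(\one-\bx^{(\theta)}),(\one-\bp)\wedge\OPT\rangle$ to be small, i.e., $\bp\wedge\OPT$ already captures most of $\OPT$'s value, so Double Greedy below $\bp$ recovers a large fraction of $F(\OPT)$. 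Since the right $\theta$ is unknown, the algorithm enumerates it. Without some analogue of this two-solution case analysis (and a concrete mechanism like the dampening constraint that makes ``$\bp$ is correlated with $\OPT$'' provable), your proposal does not yield a bound beyond $1/e$.
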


In order to keep the analysis as simple as possible, we have not tried to optimize the constant in the theorem above. We believe a better constant can be obtained using the techniques in this paper, and we hope that our work will lead to improved approximation guarantees for the problem.

Using known rounding techniques, we obtain improved approximation guarantees for several classes of constraints. In particular, we obtain a $0.372 - o(1)$ approximation for submodular maximization subject to a matroid constraint and a $0.372 - \eps$ approximation for a constant number of knapsack constraints, which improve over the $1/e - o(1)$ and $1/e - \eps$ approximations \cite{FeldmanNS11}. 

\subsection{Our techniques}
Our starting point is the unified continuous greedy algorithm~\cite{FeldmanNS11} achieving the approximation factor $1/e$. The algorithm grows a solution $\bx$ over the time interval $[0,1]$, improving $F(\bx)$ in each time step by an amount proportional to $(1-\|\bx\|_\infty) F(\OPT)$. As $\bx$ increases over time, the gain starts out large and decreases to 0 at the end of the process. The change in $\bx$ in each time step is proportional\footnote{We use $\bx \circ \by$ to denote the vector whose $i$-th coordinate is $\bx_i \cdot \by_i$.} to $\bv = \argmax_{\bc\in C} \langle \nabla F(\bx)\circ (\one-\bx), \bc\rangle$. Notice that we can improve the gain ($(1-\|\bx\|_{\infty}) F(\OPT)$) by slowing down the growth of $\|\bx\|_{\infty}$. Thus in our algorithm, we add a new constraint $\|\bc\|_{\infty}\le \alpha$ and have $\bv = \argmax_{\bc\in C, \|\bc\|_{\infty} \le \alpha} \langle \nabla F(\bx)\circ (\one-\bx), \bc\rangle$. How does this change affect the performance of the algorithm and in particular, the quantity $\langle \nabla F(\bx)\circ (\one-\bx), \bv\rangle$? Intuitively, if there is a second solution other than $\OPT$ with value close to $\OPT$, we can pick $\bv$ to be a mixture of $\OPT$ and this solution and have a new solution comparable to $\OPT$ but with lower $\ell_\infty$ norm. Thus, if this is the case, we do not lose very much in $\langle \nabla F(\bx)\circ (\one-\bx), \bv\rangle$ while simultaneously increasing the gain ($1-\|\bx\|_{\infty})F(\OPT)$. On the other hand, if there is no such solution, the crucial insight is that $\bv$ must be \emph{well correlated} with $\OPT$. Thus we can identify a good fraction of $\OPT$ by searching for $\argmax_{\bx\le \bv} F(\bx)$. (Here we crucially use the fact that $C$ is a down-closed polytope.) This problem turns out to be not very different from the \emph{unconstrained} setting and we can use a variant of the double greedy algorithm of Buchbinder \etal~\cite{Buchbinder2012} to find a good solution.

The above description is an intuitive but simplified overview of the algorithm. Nevertheless, each routine mentioned above corresponds to a part of the algorithm described in Section~\ref{sec:algorithm}. The main technical difficulty is in formalizing the statement ``$\bv$ must be well correlated with $\OPT$'' and the subsequent identification of (a large part of) $\OPT$. We describe how to overcome these difficulties in Section~\ref{sec:z-value}.

\smallskip\noindent
{\bf Comparison with the work of \cite{Buchbinder2014a} for a cardinality constraint.}
The idea of adding a new constraint $\|\bc \|_{\infty} \le \alpha$ is inspired by a recent algorithm by~\cite{Buchbinder2014a} for the cardinality constraint. In their setting, the goal is to maximize $F(\bx)$ subject to a constraint $\|\bx\|_1 \le k$. In the $i$th step, their algorithm picks $2(k-i)$ elements with maximum marginal gain and randomly adds one of them to the current solution. This can be viewed as an analog of adding a constraint $\|\bc\|_\infty \le k/(2k-2i)$ when the time is in the range $[(i-1)/k, i/k]$. This more sophisticated use of varying thresholds depending on the time is allowed in our framework but in the simple solution presented here, we just use a fixed threshold throughout.

When the marginal gain ($\langle \nabla F(\bx)\circ (1-\bx), \bv \rangle$ in our case) is small, they also use a variant of the double greedy algorithm to finish the solution. From the $2(k-i)$ elements that are picked in the last iteration, their algorithm picks $k-i$ elements using a sophisticated variant of double Greedy, and add them to the $i$ elements it picked in the previous iterations. Unfortunately this step crucially uses the structure of the solution when the constraint is a cardinality constraint. From the point of view of a continuous Greedy algorithm, this is analogous to doubling a fractional solution and then selecting half of the coordinates. It is not clear what the analog should be in the general case. Instead, we compute a different solution $\bp=\argmax_{\bc\in C} \langle \nabla F(\bx)\circ (1-\bx), \bc \rangle$ (dropping the dampening constraint) and use standard double greedy to find $\bz=\argmax_{\bc\le \bp} F(\bc)$. While $\bv$ and $\bp$ could potentially be very different, we manage to connect $\langle \nabla F(\bx)\circ (1-\bx), \bv \rangle$ and $\langle \nabla F(\bx)\circ (1-\bx), \bp \rangle$, which is enough to prove the existence of $\bz\le \bp$ with large $F(\bz)$.

In summary, in this work, we introduce several novel insights that lead to a much more general algorithm and analysis. We believe that our algorithm and analysis are conceptually much simpler and cleaner, and we hope that our techniques will lead to further improvements for the problem.

\subsection{Related work}

Submodular maximization problems are well-studied and it is hard to do justice to all previous literature on this subject. Starting with the seminal work Nemhauser et al.~\cite{Nemhauser1978}, the classical approaches to these problems are largely combinatorial and based on greedy and local search algorithms. Over the years, with increasing sophistication, this direction has led to many tight results such as the algorithm of~\cite{Sviridenko2004} for a knapsack constraint, and the current best results for constraints such as multiple matroid constraints~\cite{Lee2010a}. In the last few years, another approach emerged~\cite{Calinescu2011} that follows the popular paradigm in approximation algorithms of optimizing a continuous relaxation and rounding the resulting fractional solution. A key difficulty that separates the submodular setting from the classical setting is that even finding a \emph{fractional solution} may be quite challenging, and in particular it is $\mathrm{\bf NP}$-hard to solve the continuous relaxation for maximizing submodular functions that is based on the multilinear extension. Thus, a line of work has been developed to \emph{approximately} optimize this relaxation~\cite{Calinescu2011,Lee2009,Gharan2011,FeldmanNS11} culminating in the work~\cite{FeldmanNS11}, which we extend here.

\section{Preliminaries}
Here we consider the problem of maximizing a submodular function $f$ subject to a downward closed convex constraint $\bx\in C$.

We use the following notation. Let $n = |V|$. We write $\bx \le \by$ if $\bx_i \le \by_i$ for all coordinates $i \in [n]$. Let $\bx \circ \by$ denote the vector whose $i$-th coordinate is $\bx_i \cdot \by_i$. Let $\bx \vee \by$ (resp. $\bx \wedge \by$) be the vector whose $i$-th coordinate is $\max\{\bx_i, \by_i\}$ (resp. $\min\{\bx_i, \by_i\}$). Let $\one_S \in \{0, 1\}^V$ denote the indicator vector of $S \subseteq V$, i.e., the vector that has a $1$ in entry $i$ if and only if $i \in S$. 

Let $F: [0, 1]^n \rightarrow \R_+$ denote the multilinear extension of $f$:
	\[ F(\bx) = \Ex[f(R(\bx))] = \sum_{S \subseteq V} f(S) \prod_{i \in S} \bx_i \prod_{i \in V \setminus S} (1 - \bx_i),\]
where $R(\bx)$ is a random subset of $V$ where each $i \in V$ is included independently at random with probability $\bx_i$.
We use $\nabla F$ to denote the gradient of $F$ and we use ${\partial F \over \partial \bx_i}$ to denote the $i$-th coordinate of the gradient of $F$.

The multilinear extension has the following well-known properties, see for instance \cite{Vondrak2008}.

\begin{claim}
\label{fact1}
	$\frac{\partial F}{\partial x_i}(\bx) = F(\bx \vee \one_{\{i\}}) - F(\bx \wedge \one_{V\setminus \{i\}})$.
\end{claim}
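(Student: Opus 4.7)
The plan is to exploit the fact that $F$, as written in the sum-of-products form, is affine (hence multilinear) in each coordinate separately. First I would partition the sum defining $F(\bx)$ according to whether $i \in S$ or $i \notin S$, writing
\[ F(\bx) = \bx_i \cdot A(\bx_{-i}) + (1-\bx_i) \cdot B(\bx_{-i}), \]
where $A(\bx_{-i}) = \sum_{S \ni i} f(S) \prod_{j \in S \setminus \{i\}} \bx_j \prod_{j \notin S} (1-\bx_j)$ and $B(\bx_{-i}) = \sum_{S \not\ni i} f(S) \prod_{j \in S} \bx_j \prod_{j \notin S \cup \{i\}} (1-\bx_j)$ are both independent of $\bx_i$.

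Since the expression is affine in $\bx_i$, differentiating gives $\partial F/\partial \bx_i (\bx) = A(\bx_{-i}) - B(\bx_{-i})$. The remaining step is to identify these two coefficients with the claimed function evaluations. For $A(\bx_{-i})$: setting $\bx_i := 1$ in the original definition of $F$ kills every term with $i \notin S$ (via the $(1-\bx_i)$ factor) and leaves each surviving term exactly equal to the corresponding summand of $A$; therefore $A(\bx_{-i}) = F(\bx \vee \one_{\{i\}})$. Symmetrically, setting $\bx_i := 0$ kills every term with $i \in S$ and recovers precisely $B(\bx_{-i})$; hence $B(\bx_{-i}) = F(\bx \wedge \one_{V \setminus \{i\}})$.

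Substituting yields the claimed identity. There is no real obstacle here: the only subtlety worth double-checking is the bookkeeping of which product factor to strip when setting $\bx_i$ to $0$ or $1$, i.e., making sure that in the $A$ sum the product $\prod_{j \in S \setminus \{i\}} \bx_j$ (not $\prod_{j \in S} \bx_j$) appears, and analogously for $B$. Once this is confirmed, the claim follows immediately from the definition of the multilinear extension and elementary calculus.
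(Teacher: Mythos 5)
Your proof is correct and follows essentially the same route as the paper's: decompose $F(\bx) = \bx_i \cdot F(\bx \vee \one_{\{i\}}) + (1 - \bx_i) \cdot F(\bx \wedge \one_{V\setminus \{i\}})$ and differentiate the affine dependence on $\bx_i$. You simply spell out in more detail the identification of the two coefficients with the corresponding function evaluations, which the paper takes as read.
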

\begin{proof}
  Note that
    \[ F(\bx) = \bx_i \cdot F(\bx \vee \one_{\{i\}}) + (1 - \bx_i) \cdot F(\bx \wedge \one_{V\setminus \{i\}}).\]
  Thus if we take the partial derivative with respect to $\bx_i$, we obtain the claim. 
\end{proof}

\begin{claim}
\label{fact2}
	If $\bx\le \by$ then $\nabla F(\bx) \ge \nabla F(\by)$.
\end{claim}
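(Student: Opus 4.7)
The plan is to fix an arbitrary coordinate $i$ and compare $\frac{\partial F}{\partial \bx_i}(\bx)$ with $\frac{\partial F}{\partial \bx_i}(\by)$ using Claim~\ref{fact1}, then reduce the inequality to the ordinary submodularity of $f$ on subsets via a coupling argument.

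First I would apply Claim~\ref{fact1} to both points to write
\[ \frac{\partial F}{\partial \bx_i}(\bx) = F(\bx \vee \one_{\{i\}}) - F(\bx \wedge \one_{V \setminus \{i\}}), \qquad \frac{\partial F}{\partial \bx_i}(\by) = F(\by \vee \one_{\{i\}}) - F(\by \wedge \one_{V \setminus \{i\}}).\]
Because the right-hand sides do not depend on the $i$-th coordinate of $\bx$ or $\by$, I may freely assume $\bx_i = \by_i = 0$. Denoting by $R_i(\bx)$ a random subset of $V \setminus \{i\}$ that contains each $j \neq i$ independently with probability $\bx_j$, the definition of the multilinear extension yields
\[ \frac{\partial F}{\partial \bx_i}(\bx) = \Ex\bigl[f(R_i(\bx) \cup \{i\}) - f(R_i(\bx))\bigr],\]
and analogously for $\by$.

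Next I would couple $R_i(\bx)$ and $R_i(\by)$ in the standard way: draw independent uniform variables $u_j \in [0,1]$ for each $j \in V \setminus \{i\}$, and include $j$ in $R_i(\bx)$ iff $u_j \le \bx_j$ and in $R_i(\by)$ iff $u_j \le \by_j$. Since $\bx \le \by$, this coupling enforces $R_i(\bx) \subseteq R_i(\by)$ with probability one, while preserving the correct marginal distributions on each side.

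The final step is to invoke the set-theoretic characterization of submodularity of $f$: for any two sets $A \subseteq B \subseteq V \setminus \{i\}$ we have $f(A \cup \{i\}) - f(A) \ge f(B \cup \{i\}) - f(B)$. Applying this pointwise under the coupling to $A = R_i(\bx)$ and $B = R_i(\by)$, and taking expectations, gives $\frac{\partial F}{\partial \bx_i}(\bx) \ge \frac{\partial F}{\partial \bx_i}(\by)$. Since $i$ was arbitrary, this establishes $\nabla F(\bx) \ge \nabla F(\by)$ coordinate-wise. There is no genuine obstacle here; the only subtlety is observing that Claim~\ref{fact1} removes any dependence on $\bx_i, \by_i$, so that the inequality follows purely from submodularity of $f$ on subsets of $V \setminus \{i\}$ via the monotone coupling.
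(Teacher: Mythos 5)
Your proof is correct and follows essentially the same route as the paper: express both partial derivatives via Claim~\ref{fact1} and then reduce the resulting inequality between multilinear-extension values to set-level submodularity. The only difference is that the paper asserts this last step with the single word ``submodularity,'' whereas you supply the standard monotone-coupling argument that justifies it.
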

\begin{proof}
  Fix a coordinate $i$. Since $\bx \leq \by$, submodularity implies that
  \[ F(\bx \vee \one_{\{i\}}) - F(\bx \wedge \one_{V\setminus \{i\}}) \geq F(\by \vee \one_{\{i\}}) - F(\by \wedge \one_{V\setminus \{i\}}). \]
  By Claim~\ref{fact1}, the left-hand side of the inequality above is ${\partial F \over \partial x_i}(\bx)$ and the right-hand side is ${\partial F \over \partial x_i}(\by)$.
\end{proof}

\begin{claim}
\label{fact3}
	$F$ is concave along any line of direction $\bd \geq \zero$. That is, for any vector $\bx$, the function $\phi: \R \rightarrow \R$ such that $\phi(t) = F(\bx + t \cdot \bd)$ for every $t \in \R$ for which $\bx + t \cdot \bd \in [0, 1]^V$ is concave. 
\end{claim}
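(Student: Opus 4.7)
The plan is to reduce concavity of $\phi$ to showing $\phi'$ is non-increasing, and then derive this directly from Claim~\ref{fact2}, which has already been established using submodularity. This gives a clean argument that leverages the infrastructure already built up in the preceding two claims, rather than computing the Hessian of $F$ from scratch.

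Concretely, I would first observe that $\phi$ is a polynomial in $t$ (since $F$ is a polynomial in its coordinates), so it is differentiable, and by the chain rule
\[ \phi'(t) = \langle \nabla F(\bx + t\bd), \bd \rangle. \]
To conclude concavity of $\phi$ on its domain, it suffices to show that $\phi'$ is non-increasing in $t$. So I would fix $s \le t$ and compare $\phi'(s)$ with $\phi'(t)$.

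The key step is then to note that, since $\bd \geq \zero$ and $s \leq t$, we have $\bx + s\bd \leq \bx + t\bd$ coordinate-wise. Applying Claim~\ref{fact2} yields $\nabla F(\bx + s \bd) \ge \nabla F(\bx + t\bd)$ coordinate-wise. Because $\bd \ge \zero$, taking the inner product with $\bd$ preserves the inequality, so
\[ \phi'(s) = \langle \nabla F(\bx + s\bd), \bd \rangle \;\ge\; \langle \nabla F(\bx + t \bd), \bd \rangle = \phi'(t). \]
Thus $\phi'$ is non-increasing on its domain, hence $\phi$ is concave. There is essentially no obstacle here: the nonnegativity hypothesis $\bd \ge \zero$ is used twice (once to ensure the points are ordered so Claim~\ref{fact2} applies, and once to preserve the inequality when contracting with $\bd$), and both uses are immediate. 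The proof is just a two-line application of Claim~\ref{fact2}.
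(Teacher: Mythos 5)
Your proof is correct and matches the paper's (one-line) proof sketch in substance: the paper asserts $\phi''(t)\le 0$ by submodularity, and your argument that $\phi'(t)=\langle \nabla F(\bx+t\bd),\bd\rangle$ is non-increasing via Claim~\ref{fact2} is just the cleanest way to make that sketch rigorous. Nothing is missing.
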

\begin{proofsketch}
  By submodularity, we can verify that $\phi''(t) \leq 0$ and thus $\phi$ is concave. 
\end{proofsketch}

\section{Double Greedy algorithm for a box constraint}
\label{sec:double-greedy-box}

In this section, we describe an algorithm for maximizing the multilinear extension subject to a box constraint: given $\bu$ and $\bv$, find $\max_{\bu\le \bx \le \bv} F(\bx)$. The algorithm is similar to the Double Greedy algorithm of \cite{Buchbinder2012} and it is given in Figure~\ref{fig:double-greedy-box}.

\begin{figure}[t!]
\centering
\begin{minipage}{0.7\linewidth}
\begin{algorithm}[H]
\label{alg:double-greedy-box}
\caption{Algorithm for a box constraint}
	$\bu^{(0)} = \bu$, $\bv^{(0)} = \bv$ \;
	\For{$i\in [n]$}{
  	$a_i \gets (\bv_i - \bu_i) \left(F\left(\bu^{(i-1)} \vee \one_{\{i\}} \right) - F\left(\bu^{(i-1)} \wedge \one_{V\setminus\{i\}} \right) \right)$ \;
		$b_i \gets (\bv_i - \bu_i) \left(F\left(\bv^{(i-1)} \wedge \one_{V\setminus\{i\}} \right) - F\left(\bv^{(i-1)} \vee \one_{\{i\}} \right)\right)$ \;
  	$a_i' = \max(a_i, 0), b_i' = \max(b_i, 0)$ \;
  	\If{$a'_i + b'_i \neq 0$}{
			$\bu^{(i)} \gets \bu^{(i-1)} + \frac{a_i'}{a_i'+b_i'} \cdot (\bv_i - \bu_i) \cdot \one_{\{i\}}$ \;
  		$\bv^{(i)} \gets \bv^{(i-1)} - \frac{b_i'}{a_i'+b_i'} \cdot (\bv_i - \bu_i) \cdot \one_{\{i\}}$ \;
		}\Else{
     \tcp{Can update $\bu^{(i)}_i$ and $\bv^{(i)}_i$ to any common value}
     \tcp{We set $\bu^{(i)}_i = \bv^{(i)}_i = \bv^{(i - 1)}_i$}
		  $\bu^{(i)} \gets \bu^{(i - 1)} + (\bv_i - \bu_i) \cdot \one_{\{i\}}$ \;
      $\bv^{(i)} \gets \bv^{(i - 1)}$ \;
    }
	}
	{\bf return} $\bu^{(n)}$ \;
\end{algorithm}
\end{minipage}
\caption{Double Greedy algorithm for a box constraint $\{\bx \colon \bu \leq \bx \leq \bv\}$.}
\label{fig:double-greedy-box}
\end{figure}

The proof of the following lemma is similar to the analysis of the Double Greedy algorithm and it can be found in Appendix~\ref{app:double-greedy-box}.

\begin{lemma}\label{lem:box-constraint}
	Algorithm~\ref{alg:double-greedy-box} finds a solution $\bx$ to the problem $\max_{\bu \leq \bx \leq \bv} F(\bx)$ such that
		\[ F(\bx) \ge {1 \over 2} \cdot F(\OPT) + {1 \over 4} \cdot F(\bu) + {1 \over 4} \cdot F(\bv), \]
	where $\OPT$ is an optimal solution to the problem $\max_{\bu \leq \bx \leq \bv} F(\bx)$. 
\end{lemma}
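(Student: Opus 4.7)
My plan is to adapt the standard Double Greedy analysis of Buchbinder~et~al.~\cite{Buchbinder2012} to the fractional, box-constrained setting. The main device is a hybrid sequence of vectors $\bo^{(0)},\ldots,\bo^{(n)}$ interpolating from $\OPT$ to the output: set $\bo^{(i)}_j = \bu^{(i)}_j$ for $j \le i$ (using the invariant $\bu^{(i)}_j = \bv^{(i)}_j$ for all $j \le i$, which is immediate from the update rule) and $\bo^{(i)}_j = \OPT_j$ for $j > i$, so $\bo^{(0)} = \OPT$ and $\bo^{(n)} = \bu^{(n)} = \bx$. I will also maintain the coordinate-wise invariant $\bu^{(i-1)} \le \bo^{(i-1)} \le \bv^{(i-1)}$; this holds initially because $\bu \le \OPT \le \bv$, and at each iteration the three vectors still agree on the newly processed coordinate.

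The heart of the proof is the per-iteration inequality
\begin{equation*}
F(\bo^{(i-1)}) - F(\bo^{(i)}) \;\le\; \frac{1}{2}\bigl[(F(\bu^{(i)}) - F(\bu^{(i-1)})) + (F(\bv^{(i)}) - F(\bv^{(i-1)}))\bigr].
\end{equation*}
Summing over $i=1,\ldots,n$ and using $\bu^{(n)} = \bv^{(n)} = \bo^{(n)} = \bx$ yields $F(\OPT) - F(\bx) \le F(\bx) - \frac{1}{2}F(\bu) - \frac{1}{2}F(\bv)$, which rearranges to the statement of the lemma.

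To prove the per-iteration inequality I exploit the multilinearity of $F$: since only coordinate $i$ is modified, each of the three differences equals the displacement in that coordinate times a marginal that does not depend on the $i$-th coordinate itself. Let $\delta_{\bu}$, $\delta_{\bo}$, $\delta_{\bv}$ denote the corresponding marginals $F(\cdot \vee \one_{\{i\}}) - F(\cdot \wedge \one_{V\setminus\{i\}})$ evaluated at $\bu^{(i-1)}$, $\bo^{(i-1)}$, $\bv^{(i-1)}$. Claim~\ref{fact1} and Claim~\ref{fact2}, combined with the sandwich $\bu^{(i-1)} \le \bo^{(i-1)} \le \bv^{(i-1)}$, give $\delta_{\bu} \ge \delta_{\bo} \ge \delta_{\bv}$, equivalently $a_i \ge (\bv_i - \bu_i)\delta_{\bo} \ge -b_i$. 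I will then case-split on the signs of $a_i$ and $b_i$. If $a_i \le 0$ or $b_i \le 0$ the sandwich pins the inadmissible marginals to be non-negative or non-positive accordingly, forcing the LHS to be at most zero while the RHS remains non-negative; the degenerate case $a_i'+b_i'=0$ collapses to all marginals being zero. In the main case $a_i, b_i > 0$, the update sets $\bo^{(i)}_i = \bu_i + \tfrac{a_i}{a_i+b_i}(\bv_i - \bu_i)$, a direct computation gives RHS $= \frac{a_i^2 + b_i^2}{2(a_i+b_i)}$, and a maximization of the LHS over the admissible range of $\OPT_i \in [\bu_i,\bv_i]$ and $\delta_{\bo}$ (subject to the sandwich) yields LHS $\le \frac{a_i b_i}{a_i+b_i}$; the inequality then reduces to $2a_ib_i \le a_i^2+b_i^2$, which is AM--GM.

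The main obstacle will be the case analysis in the regime $a_i, b_i > 0$: one has to show that the two a priori different ``worst subcases'' (namely $\OPT_i$ at the upper extreme $\bv_i$ with $\delta_{\bo}$ at its upper extreme $a_i/(\bv_i - \bu_i)$, versus $\OPT_i = \bu_i$ with $\delta_{\bo}$ at its lower extreme $-b_i/(\bv_i - \bu_i)$) both evaluate to the clean bound $\frac{a_i b_i}{a_i + b_i}$, while intermediate values of $\OPT_i$ give strictly smaller LHS because the sign of $\bo^{(i)}_i - \OPT_i$ flips. Once this is verified, the remaining work is bookkeeping the telescoping sum and checking invariants.
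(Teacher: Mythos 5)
Your proposal is correct and follows the same skeleton as the paper's proof in Appendix~\ref{app:double-greedy-box}: a hybrid sequence interpolating from $\OPT$ to the output (your $\bo^{(i)}$ coincides with the paper's $\OPT^{(i)} = (\OPT \vee \bu^{(i)}) \wedge \bv^{(i)}$), the same per-iteration inequality (\ref{eq:eq2}), telescoping, and AM--GM at the end. The one genuine difference is how $\OPT_i$ is handled: the paper first proves Lemma~\ref{lem:opt-box}, a probabilistic rounding argument showing one may assume $\OPT_i \in \{\bu_i, \bv_i\}$ for every $i$, and then does a two-case analysis; you instead keep $\OPT_i$ anywhere in $[\bu_i, \bv_i]$ and bound the bilinear expression $(\OPT_i - \bo^{(i)}_i)\,\delta_{\bo}$ directly over the rectangle $\OPT_i \in [\bu_i,\bv_i]$, $\delta_{\bo} \in [-b_i/(\bv_i-\bu_i),\, a_i/(\bv_i-\bu_i)]$, observing that the maximum $\tfrac{a_i b_i}{a_i+b_i}$ is attained at the two corners you identify. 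Your route saves a lemma at the cost of a slightly more delicate extremal computation (one must argue, as you do, that the sign of $\delta_{\bo}$ determines which endpoint of $[\bu_i,\bv_i]$ is worst); the paper's route front-loads that work into a clean structural statement about $\OPT$ that makes each iteration a one-line gradient comparison. Both are complete; your handling of the degenerate case $a_i'+b_i'=0$ and of the signs when $a_i \le 0$ or $b_i \le 0$ matches the paper's.
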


Alternatively, we can reduce the problem with a box constraint to the unconstrained problem by defining a suitable submodular function $g(S) = F(\bu + (\bv - \bu) \circ \one_{S})$. We can show that this reduction is correct using the same argument given in the appendix.

\iffalse
This allows us to use the double Greedy algorithm directly to obtain the result above.
\fi

\section{The algorithm}
\label{sec:algorithm}

\begin{figure}[t!]
\centering
\begin{minipage}{0.7\linewidth}
\begin{algorithm}[H]
\label{alg:main}
\caption{Algorithm for a general constraint}
Initialize $\bx^* = \zero$ \;
\For{$\theta \in [0, 1]$}{
  \tcp{Dampened Continuous Greedy stage}
	Initialize $\bx^{(0)} = \zero$ \;
	\For{$t \in [0, \theta]$}{
	$\bv^{(t)} = \argmax_{\bc \in C, \textcolor{red}{\|\bc\|_{\infty} \le \alpha}} \langle \nabla F(\bx^{(t)}) \circ (\one - \bx^{(t)}), \bc \rangle$ \;
  Update $\bx^{(t)}$ according to ${d \bx^{(t)} \over dt} = \bv^{(t)} \circ (\one - \bx^{(t)})$ \;
	}
  \;
  \tcp{Standard Continuous Greedy stage}
  Initialize $\by^{(\theta)} = \bx^{(\theta)}$ \;
  \For{$t \in (\theta, 1]$}{ 
    $\bv^{(t)} = \argmax_{\bc \in C} \langle \nabla F(\by^{(t)}) \circ (\one - \by^{(t)}), \bc \rangle$ \;
    Update $\by^{(t)}$ according to ${d \by^{(t)} \over dt} = \bv^{(t)} \circ (\one - \by^{(t)})$ \;
  }
  \;
  \tcp{Double Greedy stage}
  $\bp = \argmax_{\bc \in C} \langle \nabla F(\bx^{(\theta)}) \circ (\one - \bx^{(\theta)}), \bc \rangle$ \;
  Find $\bz$ approximating $\argmax_{\bc \le \bp} F(\bc)$ using the Double Greedy algorithm (Algorithm~\ref{alg:double-greedy-box})\;
  \;
  \tcp{Update the best solution}
  \If{$F(\by^{(1)}) > F(\bx^*)$}{
		$\bx^* = \by^{(1)}$ \;
  }
	\If{$F(\bz) > F(\bx^*)$}{
		$\bx^* = \bz$ \;
	}
}
Return $\bx^*$ \;
\end{algorithm}
\end{minipage}
\caption{Continuous algorithm for a general constraint}
\end{figure}

In this section, we describe our main algorithm for the problem $\max_{\bx \in C} F(\bx)$. We first give a continuous version of our algorithm; in order to efficiently implement the algorithm, we discretize it using a standard approach, and we give the details in Appendix~\ref{app:discretized}.

The algorithm picks the best out of two solutions. The first solution is constructed by running the Continuous Greedy algorithm \emph{with an additional dampening constraint\footnote{The dampening constraint imposes an $\ell_{\infty}$ constraint in line $5$ of Algorithm~\ref{alg:main}.}} as long as the marginal gain remains high despite the dampening constraint, and then finishing the solution via the standard Continuous Greedy algorithm for the remaining time. The second solution is constructed by running Double Greedy exactly when the marginal gain becomes low because of the dampening constraint, which must happen early if the first solution is not good. Since we do not know precisely when the marginal gain becomes low, the algorithm tries all possible values via the outer for loop on line 2.

\iffalse
The algorithm is inspired by the algorithm of \cite{Buchbinder2014a} for a cardinality constraint. Our algorithm and its analysis differ from those of \cite{Buchbinder2014a} in several key ways that allowed us to obtain an improved approximation for an arbitrary constraint. 
\fi

\section{The analysis}
\label{sec:analysis}

In this section, we analyze the algorithm and show that it achieves an approximation greater than $0.372$. We remark that the analysis in Subsection~\ref{sec:y-value} is fairly standard, and the crux of the analysis is in Subsection~\ref{sec:z-value}.

In the following, we fix an iteration $\theta$ of the outer loop and we analyze the solutions constructed in that iteration. In the remainder of this section, all of the vectors $\bx^{(\cdotp)}$, $\by^{(\cdotp)}$, etc. refer to the vectors during iteration $\theta$.

\subsection{Analysis of the solution $\by^{(1)}$}
\label{sec:y-value}

In this section, we analyze the solution $\by^{(1)}$ constructed in any given iteration $\theta \in [0, 1]$.
\iffalse
The analysis is similar to the analysis of the unified continuous greedy algorithm of \cite{FeldmanNS11}.
\fi

\begin{theorem} \label{thm:y-value}
   Let $\theta \in [0, 1]$ and let $\by^{(1)}$ be the solution constructed by Algorithm~\ref{alg:main} in iteration $\theta$ of the outer loop. We have
		\[ F(\by^{(1)}) \geq  {1 \over e} \left( e^{\theta} F(\bx^{(\theta)}) + (1 - \theta) e^{(1 - \alpha) \theta} F(\OPT)\right).\]
\end{theorem}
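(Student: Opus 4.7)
The plan is to analyze the evolution of $F(\by^{(t)})$ for $t \in [\theta, 1]$ by deriving a differential inequality and solving it with an integrating factor, exactly as in the standard non-monotone continuous greedy analysis of Feldman et al.~\cite{FeldmanNS11}. The twist is that the starting point at $t=\theta$ already has $\|\by^{(\theta)}\|_\infty = \|\bx^{(\theta)}\|_\infty \le 1 - e^{-\alpha\theta}$ because the preceding stage used the dampening constraint $\|\bv^{(s)}\|_\infty \le \alpha$; this smaller $\ell_\infty$-norm will be what produces the improved factor $e^{(1-\alpha)\theta}$ in the statement.

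The first step is to establish the $\ell_\infty$ bound. Since $\frac{d\bx^{(t)}_i}{dt} = \bv^{(t)}_i(1-\bx^{(t)}_i)$ on $[0,\theta]$ with $\bv^{(t)}_i \le \alpha$, integrating coordinate-wise gives $1-\bx^{(\theta)}_i \ge e^{-\alpha\theta}$. Carrying the same argument through the standard stage (where only $\bv^{(t)}_i \le 1$ is available) yields $1-\by^{(t)}_i \ge e^{-\alpha\theta}e^{-(t-\theta)}$ for $t \in [\theta,1]$, so $\|\by^{(t)}\|_\infty \le 1 - e^{-(t-\theta+\alpha\theta)}$.

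The second step is the standard differential inequality for $F(\by^{(t)})$. Using $\frac{d\by^{(t)}}{dt} = \bv^{(t)} \circ (\one-\by^{(t)})$ and the optimality of $\bv^{(t)}$, I compare against $\one_{\OPT} \in C$ to get $\frac{dF(\by^{(t)})}{dt} \ge \langle \nabla F(\by^{(t)}) \circ (\one-\by^{(t)}), \one_{\OPT}\rangle$. Concavity along the nonnegative direction $\one_{\OPT}\circ(\one-\by^{(t)})$ (Claim~\ref{fact3}) then gives this inner product is at least $F(\by^{(t)} \vee \one_{\OPT}) - F(\by^{(t)})$. Finally, the non-monotone lemma $F(\by \vee \one_{\OPT}) \ge (1-\|\by\|_\infty)F(\OPT)$ (which follows by applying the submodular sampling lemma to the shifted submodular function $S \mapsto f(S \cup \OPT)$, using nonnegativity) combined with the $\ell_\infty$ bound from step one gives
\[ \frac{dF(\by^{(t)})}{dt} \;\ge\; e^{-(t-\theta+\alpha\theta)} F(\OPT) - F(\by^{(t)}). \]

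The final step is to integrate. Multiplying by the integrating factor $e^t$, the right-hand side collapses to the constant $e^{(1-\alpha)\theta} F(\OPT)$, so $\frac{d}{dt}\bigl(e^t F(\by^{(t)})\bigr) \ge e^{(1-\alpha)\theta} F(\OPT)$; integrating from $\theta$ to $1$ and using $\by^{(\theta)} = \bx^{(\theta)}$ gives the claim after dividing by $e$. The only subtle point is the second step: justifying $F(\by \vee \one_{\OPT}) \ge (1-\|\by\|_\infty) F(\OPT)$ in the non-monotone setting and verifying that the inequality from Claim~\ref{fact3} goes in the direction needed; once these are in place, the rest is a routine ODE calculation.
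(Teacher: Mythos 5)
Your proposal is correct and follows essentially the same route as the paper: the same $\ell_\infty$ bounds on $\bx^{(\theta)}$ and $\by^{(t)}$ (matching Corollaries~\ref{cor:max-x} and~\ref{cor:max-y}), the same comparison of $\bv^{(t)}$ against $\one_{\OPT}$ combined with Claim~\ref{fact3} and Lemma~\ref{lem:x-or-opt}, and the same integration of the resulting differential inequality. The only cosmetic difference is that you invoke a sampling-style argument for Lemma~\ref{lem:x-or-opt}, whereas the paper simply cites it from \cite{FeldmanNS11}.
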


We devote the rest of this section to the proof of Theorem~\ref{thm:y-value}. In the remainder of this section, all of the vectors $\bx^{(\cdot)}$, $\by^{(\cdot)}$, etc. refer to the vectors during iteration $\theta$.

We start by upper bounding the $\ell_{\infty}$ norm of $\bx^{(t)}$ and $\by^{(t)}$.

\begin{lemma} \label{lem:max-norm}
	Consider the following process. Let time run from $t = t_0 \geq 0$ to $t = t_1 \leq 1$. Let $\bu^{(t)}$ be a vector updated according to
		\[ {d \bu^{(t)} \over dt} = \bv^{(t)} \circ (\one - \bu^{(t)}),\]
	where $\bv^{(t)}$ is a vector such that $\|\bv^{(t)}\|_{\infty} \leq \delta$. Then $\|\bu^{(t)}\|_{\infty} \leq 1 + (\|\bu^{(t_0)}\|_{\infty} - 1) e^{- \delta (t - t_0)}$ for each $t \in [t_0, t_1]$.
\end{lemma}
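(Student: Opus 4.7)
\noindent\textbf{Proof proposal for Lemma~\ref{lem:max-norm}.}

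The plan is to solve the ODE coordinatewise, convert to a bound on the ``slack'' $1 - u_i^{(t)}$, and then take the supremum over $i$. Fix a coordinate $i \in [n]$. Writing out the $i$-th component of the vector equation gives the scalar ODE
\[
  \frac{d u_i^{(t)}}{dt} \;=\; v_i^{(t)} \bigl(1 - u_i^{(t)}\bigr).
\]
First I would substitute $w_i(t) = 1 - u_i^{(t)}$, so that the equation becomes the linear ODE $\tfrac{d w_i}{dt} = -v_i^{(t)} w_i(t)$, whose explicit solution is
\[
  w_i(t) \;=\; w_i(t_0) \cdot \exp\!\Bigl(-\!\int_{t_0}^{t} v_i^{(s)}\, ds\Bigr).
\]

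Next I would use the hypothesis $\|\bv^{(s)}\|_\infty \le \delta$ together with the fact that the vectors $\bv^{(s)}$ come from a down-closed polytope $C \subseteq \R_{+}^{V}$ (so $v_i^{(s)} \ge 0$). The nonnegativity ensures $w_i(t) \ge 0$ whenever $w_i(t_0) \ge 0$, i.e.\ $u_i^{(t)}$ stays in $[0,1]$, which will be needed to preserve the direction of the inequality. The bound $0 \le v_i^{(s)} \le \delta$ gives $\int_{t_0}^{t} v_i^{(s)}\, ds \le \delta(t - t_0)$, hence
\[
  w_i(t) \;\ge\; w_i(t_0)\, e^{-\delta(t - t_0)},
\qquad\text{i.e.,}\qquad
  u_i^{(t)} \;\le\; 1 - \bigl(1 - u_i^{(t_0)}\bigr)\, e^{-\delta(t - t_0)}.
\]

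Finally, since $u_i^{(t_0)} \le \|\bu^{(t_0)}\|_\infty \le 1$, the quantity $u_i^{(t_0)} - 1$ is nonpositive, and multiplying the inequality $u_i^{(t_0)} - 1 \le \|\bu^{(t_0)}\|_\infty - 1$ by the positive factor $e^{-\delta(t-t_0)}$ preserves its direction. Adding $1$ and taking the maximum over $i$ yields
\[
  \|\bu^{(t)}\|_\infty \;\le\; 1 + \bigl(\|\bu^{(t_0)}\|_\infty - 1\bigr)\, e^{-\delta(t - t_0)},
\]
which is the desired bound. The only subtlety, and the step most worth stating carefully, is the sign-tracking in this last monotonicity argument, since the bound $\|\bu^{(t_0)}\|_\infty - 1 \le 0$ means one is multiplying a negative quantity by a positive exponential and must argue that this weakens, rather than strengthens, the pointwise estimate; everything else is a direct integration of a one-dimensional linear ODE.
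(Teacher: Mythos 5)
Your proof is correct and follows essentially the same route as the paper: both arguments work coordinatewise and reduce to the scalar bound $u_i^{(t)} \le 1 + (u_i^{(t_0)} - 1)e^{-\delta(t-t_0)}$, the paper by directly solving the differential inequality $\frac{d u_i^{(t)}}{dt} \le \delta(1 - u_i^{(t)})$ and you by solving the ODE exactly for the slack $1 - u_i^{(t)}$ and then bounding $\int_{t_0}^{t} v_i^{(s)}\,ds \le \delta(t-t_0)$. Your explicit sign-tracking (nonnegativity of $v_i^{(s)}$ and of the slack, and the final monotonicity step) is a careful filling-in of details the paper leaves implicit, not a different argument.
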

\begin{proof}
	The $i$-th coordinate $\bu_i^{(t)}$ of $\bu^{(t)}$ is updated according to
		\[ {d \bu_i^{(t)} \over dt} = \bv_i^{(t)} (1 - \bu_i^{(t)}) \le \delta (1 - \bu_i^{(t)}). \]
	By solving the differential inequality above, we obtain $\bu_i^{(t)} \leq 1 + C \cdot e^{- \delta t}$, where $C$ is a constant. Using the initial condition $1 + C \cdot e^{- \delta t_0} = \bu_i^{(t_0)}$, we obtain $C = (\bu_i^{(t_0)} - 1) e^{\delta t_0}$ and thus $\bu_i^{(t)} \leq 1 + (\bu_i^{(t_0)} - 1) e^{- \delta (t - t_0)}$.
\end{proof}

\begin{corollary} \label{cor:max-x}
  For every $t \in [0, \theta]$, $\|\bx^{(t)}\|_{\infty} \leq 1 - e^{- \alpha t}$.
\end{corollary}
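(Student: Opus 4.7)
The plan is to apply Lemma~\ref{lem:max-norm} directly with parameters extracted from the dampened continuous greedy stage of the algorithm. The key observation is that during iteration $\theta$, the vector $\bx^{(t)}$ is updated via
\[ \frac{d\bx^{(t)}}{dt} = \bv^{(t)} \circ (\one - \bx^{(t)}), \]
where $\bv^{(t)} = \argmax_{\bc \in C,\, \|\bc\|_\infty \le \alpha} \langle \nabla F(\bx^{(t)}) \circ (\one - \bx^{(t)}), \bc\rangle$. By construction, the feasible region for $\bv^{(t)}$ explicitly enforces $\|\bv^{(t)}\|_\infty \le \alpha$, which is exactly the hypothesis required by Lemma~\ref{lem:max-norm} with $\delta = \alpha$.

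Second, I would record the initial condition. Algorithm~\ref{alg:main} sets $\bx^{(0)} = \zero$ at the start of the dampened continuous greedy stage, so $\|\bx^{(0)}\|_\infty = 0$. Taking $t_0 = 0$ and $t_1 = \theta$ in Lemma~\ref{lem:max-norm}, the conclusion immediately gives
\[ \|\bx^{(t)}\|_\infty \le 1 + (0 - 1)\, e^{-\alpha (t - 0)} = 1 - e^{-\alpha t} \]
for every $t \in [0, \theta]$, which is the desired bound.

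There is essentially no obstacle here since the corollary is a direct specialization of the lemma; the only thing to verify is that the hypotheses of Lemma~\ref{lem:max-norm} are met, namely that the evolution equation matches and that the $\ell_\infty$ bound on the driving vector $\bv^{(t)}$ holds, both of which follow from reading off Algorithm~\ref{alg:main}.
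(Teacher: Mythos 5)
Your proof is correct and follows exactly the paper's argument: both apply Lemma~\ref{lem:max-norm} with $\delta = \alpha$, $t_0 = 0$, and the initial condition $\bx^{(0)} = \zero$, using the explicit $\ell_\infty$ constraint on $\bv^{(t)}$ in line 5 of the algorithm. No gaps; you have simply spelled out the substitution that the paper leaves implicit.
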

\begin{proof}
  The vector $\bx^{(t)}$ starts at $\zero$ and it is updated according to the update rule in line $6$, where $\bv^{(t)}$ is a vector of $\ell_{\infty}$ norm at most $\alpha$. Thus it follows from Lemma~\ref{lem:max-norm} that $\|\bx^{(t)} \|_{\infty} \leq 1 - e^{- \alpha t}$.
\end{proof}

\begin{corollary} \label{cor:max-y}
   For every $t \in [\theta, 1]$, $\| \by^{(t)} \|_{\infty} \leq 1 - e^{(1 - \alpha) \theta - t}$.
\end{corollary}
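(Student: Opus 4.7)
The plan is to apply Lemma~\ref{lem:max-norm} with starting time $t_0 = \theta$, ending time $t_1 = 1$, and dampening parameter $\delta = 1$, then plug in the bound on $\|\bx^{(\theta)}\|_\infty$ already obtained in Corollary~\ref{cor:max-x}.

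First I would observe that during the standard Continuous Greedy stage, $\by^{(t)}$ evolves according to $\frac{d \by^{(t)}}{dt} = \bv^{(t)} \circ (\one - \by^{(t)})$, where $\bv^{(t)} \in C$ and no dampening constraint is imposed. Since $C \subseteq [0,1]^V$ is a down-closed polytope relaxing a set family, every $\bv^{(t)} \in C$ satisfies $\|\bv^{(t)}\|_\infty \leq 1$; hence the hypothesis of Lemma~\ref{lem:max-norm} is met with $\delta = 1$.

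Next, the initial condition at $t = \theta$ is $\by^{(\theta)} = \bx^{(\theta)}$, and Corollary~\ref{cor:max-x} gives $\|\by^{(\theta)}\|_\infty = \|\bx^{(\theta)}\|_\infty \leq 1 - e^{-\alpha \theta}$. Plugging these into the conclusion of Lemma~\ref{lem:max-norm} yields, for every $t \in [\theta, 1]$,
\[
\|\by^{(t)}\|_\infty \;\leq\; 1 + \bigl(\|\by^{(\theta)}\|_\infty - 1\bigr) e^{-(t - \theta)} \;\leq\; 1 - e^{-\alpha \theta} \cdot e^{-(t - \theta)} \;=\; 1 - e^{(1 - \alpha)\theta - t},
\]
which is exactly the desired inequality.

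There is essentially no obstacle here; the argument is a direct invocation of the lemma combined with the previous corollary. The only mildly subtle point is noting that the monotonicity of the right-hand side of Lemma~\ref{lem:max-norm} in the initial value $\|\bu^{(t_0)}\|_\infty$ (through the factor $(\|\bu^{(t_0)}\|_\infty - 1) \leq 0$) allows us to substitute the upper bound from Corollary~\ref{cor:max-x} rather than working with the exact value of $\|\bx^{(\theta)}\|_\infty$.
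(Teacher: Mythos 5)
Your proof is correct and is essentially identical to the paper's own argument: both invoke Lemma~\ref{lem:max-norm} with $\delta = 1$ on the standard Continuous Greedy stage and substitute the bound on $\|\bx^{(\theta)}\|_\infty$ from Corollary~\ref{cor:max-x}. No issues.
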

\begin{proof}
  The vector $\by^{(t)}$ starts at $\by^{(\theta)} = \bx^{(\theta)}$ and it is updated according to the update rule in line 11, where $\bv^{(t)}$ is a vector of $\ell_{\infty}$ norm at most $1$. Thus it follows from Lemma~\ref{lem:max-norm} and the upper bound on $\|\bx^{(\theta)}\|_{\infty}$ given by Corollary~\ref{cor:max-x} that
    \[ \|\by^{(t)} \|_{\infty} \leq 1 + (\|\bx^{(\theta)}\|_{\infty} - 1) e^{- (t - \theta)} \leq 1 - e^{(1 - \alpha) \theta - t}.\]
\end{proof}

We will also need the following lemma that was shown in \cite{FeldmanNS11}.

\begin{lemma}[{\cite[Lemma III.5]{FeldmanNS11}}] \label{lem:x-or-opt}
  Let $\bx \in [0, 1]^n$ and let $S \subseteq V$. We have
    \[ F(\bx \vee \one_{S}) \geq (1 - \|\bx\|_{\infty}) f(S). \]
\end{lemma}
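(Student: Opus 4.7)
The plan is to prove $F(\bx \vee \one_S) \ge (1-\|\bx\|_\infty) f(S)$ by extrapolating along a nonnegative direction and invoking the concavity property (Claim~\ref{fact3}) together with the nonnegativity of $f$. The key observation is that $\bx \vee \one_S = \one_S + \bd$ where $\bd := \bx \circ \one_{V\setminus S} \ge \zero$, and $\bd$ has $\|\bd\|_\infty \le \|\bx\|_\infty$, so the ray $\one_S + t\bd$ stays inside $[0,1]^V$ for all $t \in [0, 1/\|\bx\|_\infty]$ (the coordinates in $S$ stay at $1$ and those outside $S$ grow from $0$ up to at most $1$).

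First I would handle the boundary cases: if $\|\bx\|_\infty = 0$ then $\bx = \zero$ and the claim reduces to $F(\one_S) = f(S)$; if $\|\bx\|_\infty = 1$ the right-hand side is $0$ and the claim follows from nonnegativity of $F$. So I can assume $\|\bx\|_\infty \in (0,1)$, set $t^* := 1/\|\bx\|_\infty > 1$, and define $\phi(t) := F(\one_S + t \bd)$ on $[0, t^*]$.

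Next I would apply Claim~\ref{fact3}: since $\bd \ge \zero$, $\phi$ is concave on $[0, t^*]$. Write $1$ as the convex combination $1 = (1-\|\bx\|_\infty)\cdot 0 + \|\bx\|_\infty \cdot t^*$ (this holds because $\|\bx\|_\infty \cdot t^* = 1$). Concavity then yields
\[
\phi(1) \;\ge\; (1-\|\bx\|_\infty)\,\phi(0) + \|\bx\|_\infty\,\phi(t^*).
\]
Finally I would identify $\phi(1) = F(\bx \vee \one_S)$, $\phi(0) = F(\one_S) = f(S)$, and note $\phi(t^*) \ge 0$ because $\one_S + t^* \bd \in [0,1]^V$ and $f$ (hence $F$) is nonnegative. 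Dropping the nonnegative $\|\bx\|_\infty \phi(t^*)$ term delivers the stated inequality.

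The main (minor) obstacle is just choosing the right extrapolation endpoint $t^*$: one must notice that $\bd$'s nonzero coordinates are exactly the $\bx_i$ for $i \notin S$, each bounded by $\|\bx\|_\infty$, so stretching all the way to $t^* = 1/\|\bx\|_\infty$ is legal and makes $1$ an interior convex combination of $0$ and $t^*$ with weight $1-\|\bx\|_\infty$ on $\phi(0)$. Everything else is immediate from Claim~\ref{fact3} and nonnegativity.
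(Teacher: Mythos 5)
Your proof is correct, but it takes a genuinely different route from the one the paper (following \cite{FeldmanNS11}) uses. The paper's argument goes through the Lov\'asz extension $\hat{f}$: it invokes the pointwise bound $F(\by) \geq \hat{f}(\by)$ and the threshold representation $\hat{f}(\by) = \Ex_{\lambda \in [0,1]}[f(\{i : \by_i \geq \lambda\})]$, observing that for $\by = \bx \vee \one_S$ every threshold $\lambda > \|\bx\|_{\infty}$ yields exactly the set $S$, and nonnegativity of $f$ lets one discard the remaining thresholds, giving $\hat{f}(\bx \vee \one_S) \geq (1-\|\bx\|_{\infty}) f(S)$. You instead stay entirely inside the multilinear world: writing $\bx \vee \one_S = \one_S + \bd$ with $\bd = \bx \circ \one_{V\setminus S} \geq \zero$, you extrapolate the segment $t \mapsto F(\one_S + t\bd)$ out to $t^* = 1/\|\bx\|_{\infty}$ (which stays in the cube since the nonzero coordinates of $\bd$ are at most $\|\bx\|_{\infty}$), and then concavity along nonnegative directions (Claim~\ref{fact3}) together with $F \geq 0$ gives $F(\one_S + \bd) \geq (1-\|\bx\|_{\infty})F(\one_S) = (1-\|\bx\|_{\infty})f(S)$; your handling of the edge cases $\|\bx\|_{\infty} \in \{0,1\}$ is also fine. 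What each approach buys: yours is self-contained, using only facts already stated in the paper (Claim~\ref{fact3} and nonnegativity) and no relation between $F$ and $\hat{f}$, which the paper never formally introduces; the Lov\'asz-extension proof is a one-liner once $F \geq \hat{f}$ is available and is the standard argument in the literature, fitting a broader toolkit where the Lov\'asz extension is used to lower bound $F$ in other settings as well.
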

\iffalse
\begin{proof}
  Let $\hat{f}$ be the Lovasz extension of $f$, and recall that $F(\bx) \geq \hat{f}(\bx)$ for every vector $\bx \in [0, 1]^n$. Thus we have
    \[ F(\bx \vee \one_{S}) \geq \hat{f}(\bx \vee \one_{S}) \geq (1 - \|\bx\|_{\infty}) \hat{f}(\one_{S}) = (1 - \|\bx\|_{\infty}) f(S).\]
   In the second inequality, we have used the following definition of the Lovasz extension.
   \begin{align*}
    \hat{f}(\bx \vee \one_{S}) &= \Ex_{\theta \in [0, 1]}[f(\{i \colon (\bx \vee \one_{S})_i \geq \theta \})]\\
    &\geq \Ex_{\theta \in (\|\bx\|_{\infty}, 1]}[f(\{i \colon (\bx \vee \one_{S})_i \geq \theta \})] \quad \mbox{($f$ is non-negative)}\\
    &= (1 - \|\bx\|_{\infty}) \hat{f}(\one_{S}).
   \end{align*}
\end{proof}
\fi

\begin{proofof}{Theorem~\ref{thm:y-value}}
	Using the chain rule, for every $t \in [\theta, 1]$, we have
	\begin{align*}
		{d F(\by^{(t)}) \over dt} &= \left\langle \nabla F(\by^{(t)}),  {d\by^{(t)} \over dt} \right \rangle\\
		&= \left\langle \nabla F(\by^{(t)}), \bv^{(t)} \circ (\one - \by^{(t)})\right \rangle\\
		&= \left\langle \nabla F(\by^{(t)}) \circ (\one - \by^{(t)}), \bv^{(t)} \right \rangle\\
		&\geq \left\langle \nabla F(\by^{(t)}) \circ (\one - \by^{(t)}), \OPT \right \rangle\\
		&\geq F(\by^{(t)} \vee \OPT) - F(\by^{(t)})\\
		&\geq e^{(1 - \alpha) \theta - t} \cdot F(\OPT) - F(\by^{(t)}). &\quad \mbox{(By Corollary~\ref{cor:max-y} and Lemma~\ref{lem:x-or-opt})}
	\end{align*}
	By solving the differential inequality, we obtain
		\[ F(\by^{(t)}) \geq {1 \over e^t} \left( e^{\theta} F(\by^{(\theta)}) + (t - \theta) e^{(1 - \alpha)\theta} F(\OPT)\right),\]
	and thus
		\[ F(\by^{(1)}) \geq {1 \over e} \left( e^{\theta} F(\by^{(\theta)}) + (1 - \theta) e^{(1 - \alpha)\theta} F(\OPT)\right).\]
	The theorem now follows, since $\by^{(\theta)} = \bx^{(\theta)}$.
\end{proofof}

\subsection{Analysis of the solution $\bz$}
\label{sec:z-value}

In this section, we analyze the solution $\bz$ constructed using the Double Greedy algorithm, and this is the crux of our argument.

\begin{theorem} \label{thm:z-value}
	Let $\theta \in [0, 1]$ and let $\bz$ be the solution constructed by Algorithm~\ref{alg:main} in iteration $\theta$ of the outer loop. If $\alpha \geq 1/2$, we have
		\[ F(\bz) \geq {1 \over 2(1 - \alpha)} (e^{-\alpha \theta} F(\OPT) - F(\bx^{(\theta)}) - \langle \nabla F(\bx^{(\theta)}) \circ (\one - \bx^{(\theta)}), \bv^{(\theta)} \rangle).\] 
\end{theorem}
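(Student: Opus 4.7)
The plan is to reduce, via Lemma~\ref{lem:box-constraint}, to the task of exhibiting one good feasible point for the box problem $\max_{\bc \le \bp} F(\bc)$. Since $F(\zero), F(\bp) \ge 0$, Lemma~\ref{lem:box-constraint} yields $F(\bz) \ge \tfrac{1}{2} F(\tilde{\bc})$ for every $\tilde{\bc}$ with $\zero \le \tilde{\bc} \le \bp$. Writing
\[ Q := e^{-\alpha\theta} F(\OPT) - F(\bx^{(\theta)}) - \langle \nabla F(\bx^{(\theta)}) \circ (\one - \bx^{(\theta)}), \bv^{(\theta)}\rangle, \]
the theorem follows once I exhibit $\tilde{\bc} \le \bp$ with $F(\tilde{\bc}) \ge Q/(1-\alpha)$; the bound is trivial when $Q \le 0$, so I assume $Q > 0$.

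To bound $Q$ from above, Corollary~\ref{cor:max-x} gives $\|\bx^{(\theta)}\|_\infty \le 1 - e^{-\alpha\theta}$, so Lemma~\ref{lem:x-or-opt} yields $e^{-\alpha\theta} F(\OPT) \le F(\bx^{(\theta)} \vee \OPT)$. Concavity (Claim~\ref{fact3}) along the non-negative direction $(\one - \bx^{(\theta)}) \circ \OPT$ then gives $F(\bx^{(\theta)} \vee \OPT) - F(\bx^{(\theta)}) \le \langle \nabla F(\bx^{(\theta)}) \circ (\one - \bx^{(\theta)}), \OPT\rangle$, and since $\OPT \in C$ while $\bp$ maximises the same linear functional over $C$, this is at most $\langle \nabla F(\bx^{(\theta)}) \circ (\one - \bx^{(\theta)}), \bp\rangle$. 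Hence $Q \le \langle \nabla F(\bx^{(\theta)}) \circ (\one - \bx^{(\theta)}), \bp - \bv^{(\theta)}\rangle$. I then decompose $\bp = \bp' + \Delta$ with $\bp' := \bp \wedge \alpha \one$ and $\Delta := (\bp - \alpha \one)^+$. Down-closedness of $C$ and $\|\bp'\|_\infty \le \alpha$ make $\bp'$ feasible for the subproblem defining $\bv^{(\theta)}$, so $\langle \nabla F(\bx^{(\theta)}) \circ (\one - \bx^{(\theta)}), \bp'\rangle \le \langle \nabla F(\bx^{(\theta)}) \circ (\one - \bx^{(\theta)}), \bv^{(\theta)}\rangle$, tightening the bound to $Q \le \langle \nabla F(\bx^{(\theta)}) \circ (\one - \bx^{(\theta)}), \Delta\rangle$. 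The hypothesis $\alpha \ge 1/2$ enters a second time: since $\|\Delta\|_\infty \le 1 - \alpha \le \alpha$, the vector $\Delta$ is itself feasible for the dampened subproblem, which yields the extra inequality $\langle \nabla F(\bx^{(\theta)}) \circ (\one - \bx^{(\theta)}), \Delta\rangle \le \langle \nabla F(\bx^{(\theta)}) \circ (\one - \bx^{(\theta)}), \bv^{(\theta)}\rangle$.

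The natural candidate is $\tilde{\bc} := \Delta / (1-\alpha)$: since $\|\Delta\|_\infty \le 1 - \alpha$ it lies in $[0,1]^V$, and the coordinate-wise inequality $(\bp_i - \alpha)/(1-\alpha) \le \bp_i$ (which follows from $\bp_i \le 1$), together with $\tilde{\bc}_i = 0$ on $\{i : \bp_i \le \alpha\}$, gives $\tilde{\bc} \le \bp$. The main obstacle is then the bound $F(\tilde{\bc}) \ge Q/(1-\alpha)$. The naive route --- replacing $Q$ by $\langle \nabla F(\bx^{(\theta)}) \circ (\one - \bx^{(\theta)}), \Delta\rangle = (1-\alpha)\langle \nabla F(\bx^{(\theta)}) \circ (\one - \bx^{(\theta)}), \tilde{\bc}\rangle$ and trying to match this by concavity (Claim~\ref{fact3}) combined with the gradient monotonicity of Claim~\ref{fact2} --- fails, because $\tilde{\bc}$ need not lie coordinate-wise below $\bx^{(\theta)}$ and the tangent-hyperplane inequalities then run in the wrong direction. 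I expect the resolution to combine the submodularity inequality $F(\tilde{\bc}) + F(\bx^{(\theta)}) \ge F(\tilde{\bc} \vee \bx^{(\theta)}) + F(\tilde{\bc} \wedge \bx^{(\theta)})$ (together with non-negativity of the meet term) with a concavity-based lower bound on $F(\tilde{\bc} \vee \bx^{(\theta)}) - F(\bx^{(\theta)})$, leveraging the second handle from the previous paragraph for the slack needed. Once $F(\tilde{\bc}) \ge Q/(1-\alpha)$ is in hand, combining with the first paragraph gives $F(\bz) \ge \tfrac{1}{2} F(\tilde{\bc}) \ge Q/[2(1-\alpha)]$, as required.
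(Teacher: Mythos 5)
Your reduction in the first paragraph (via Lemma~\ref{lem:box-constraint}, exhibiting a single witness $\tilde{\bc} \le \bp$ with $F(\tilde{\bc}) \ge Q/(1-\alpha)$) is exactly the paper's frame, and your second paragraph is sound: the chain $Q \le \langle \nabla F(\bx^{(\theta)}) \circ (\one - \bx^{(\theta)}), \bp - \bv^{(\theta)}\rangle \le \langle \nabla F(\bx^{(\theta)}) \circ (\one - \bx^{(\theta)}), \Delta\rangle$ via the decomposition $\bp = (\bp \wedge \alpha\one) + \Delta$ is correct and is a genuine (if differently packaged) analogue of the paper's comparison vector. The problem is the step you yourself flag as open, and it is not a routine gap: the inequality $F(\Delta/(1-\alpha)) \ge Q/(1-\alpha)$ does not follow from what you have, and your chosen witness is almost certainly the wrong object. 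The only handle you have on $\Delta$ is the size of the linear functional $\langle \nabla F(\bx^{(\theta)}) \circ (\one - \bx^{(\theta)}), \Delta\rangle$, but by Claim~\ref{fact3} such gradient inner products only \emph{upper}-bound increments of $F$; they never certify that $F$ of a vector is large. Concretely, for a coverage-type $f$ with heavily overlapping elements one has $F(\bw) \ll \langle \nabla F(\zero), \bw\rangle$, so ``high LP value'' does not imply ``high $F$-value.'' Your proposed repair --- submodularity plus ``a concavity-based lower bound on $F(\tilde{\bc} \vee \bx^{(\theta)}) - F(\bx^{(\theta)})$'' --- cannot work either: the concavity lower bound on that increment uses the gradient at the \emph{upper} endpoint $\tilde{\bc} \vee \bx^{(\theta)}$, which for a non-monotone $f$ can be arbitrarily negative. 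More fundamentally, $\Delta$ is a function of $\bp$ alone and need have no overlap with $\OPT$ whatsoever, while the target $Q$ contains $e^{-\alpha\theta}F(\OPT)$; no argument that never couples the witness to $\OPT$ can recover that term.

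The paper's proof fixes exactly this: the witness is $\widehat{\bz} = (\one - \bx^{(\theta)}) \circ (\bp \wedge \OPT)$, i.e., the part of $\OPT$ that survives inside $\bp$. Then $F(\widehat{\bz}) \ge F(\widehat{\bz} + \bx^{(\theta)}) - F(\bx^{(\theta)})$ by submodularity, and since $\widehat{\bz} + \bx^{(\theta)} \le \bx^{(\theta)} \vee \OPT$ with gap $(\one - \bx^{(\theta)}) \circ ((\one - \bp) \wedge \OPT)$, Claims~\ref{fact3} and~\ref{fact2} give
\[ F(\widehat{\bz}) \ge F(\bx^{(\theta)} \vee \OPT) - F(\bx^{(\theta)}) - \langle \nabla F(\bx^{(\theta)}) \circ (\one - \bx^{(\theta)}), (\one - \bp) \wedge \OPT\rangle, \]
so the quantity to control is the inner product with the \emph{missed} part of $\OPT$, namely $(\one - \bp) \wedge \OPT$. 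That is where the $\alpha$ versus $1-\alpha$ arithmetic you discovered actually gets used: the paper tests the dampened LP against $\widehat{\bv} = (1-\alpha)((\one - \bp) \wedge \OPT) + \alpha\bp$, which has $\ell_\infty$ norm at most $\alpha$ precisely because $\alpha \ge 1/2$, and combines the resulting inequality with $\langle \nabla F(\bx^{(\theta)}) \circ (\one - \bx^{(\theta)}), \bp\rangle \ge \langle \nabla F(\bx^{(\theta)}) \circ (\one - \bx^{(\theta)}), \OPT\rangle \ge F(\bx^{(\theta)} \vee \OPT) - F(\bx^{(\theta)})$ to bound the missed part by $\tfrac{1}{1-\alpha}\langle \cdot, \bv^{(\theta)}\rangle - \tfrac{\alpha}{1-\alpha}(F(\bx^{(\theta)} \vee \OPT) - F(\bx^{(\theta)}))$. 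Your decomposition of $\bp$ should be replaced by this decomposition of the test vector; the rest of your argument (the reduction via Lemma~\ref{lem:box-constraint} and the use of Corollary~\ref{cor:max-x} with Lemma~\ref{lem:x-or-opt}) then goes through as written.
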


We first give some intuition for the approach. Consider the solution $\bx$ constructed by the \emph{dampened} Continuous Greedy algorithm. The rate of growth of $F(\bx)$ is given by the inner product $\langle \nabla F(\bx) \circ (\one - \bx), \bv \rangle$ and thus the crux of the analysis is to understand how this inner product evolves over time. The inner product $\langle \nabla F(\bx) \circ (\one - \bx), \bv \rangle$ is always at least $\langle \nabla F(\bx) \circ (\one - \bx), \alpha \OPT \rangle$ and intuitively we should gain proportional to the difference between the two. If this difference is very small, the key insight is that once we drop the dampening constraint and compute the vector that maximizes $\langle \nabla F(\bx) \circ (\one - \bx), \bc \rangle$ over all feasible vectors $\bc$, we obtain a vector $\bp$ that is well-correlated with $\OPT$ in the sense that $\bp \wedge \OPT$ is a good solution. We formalize this intuition in the remainder of this section. 

In the remainder of this section, we analyze the solution
	\[ \widehat{\bz} \coloneqq (\one - \bx^{(\theta)}) \circ (\bp \wedge \OPT).\]
By submodularity, we have
	\[ F(\widehat{\bz}) - F(\mathbf{0}) \geq F(\widehat{\bz} + \bx^{(\theta)}) - F(\bx^{(\theta)}),\]
and thus it suffices to analyze $F(\widehat{\bz} + \bx^{(\theta)})$.

Note that $\bx^{(\theta)} \leq \widehat{\bz} + \bx^{(\theta)} \leq \bx^{(\theta)} + (\one - \bx^{(\theta)}) \circ \OPT = \bx^{(\theta)} \vee \OPT$. Thus Claim~\ref{fact3} and Claim~\ref{fact2} give
\begin{align*}
	F(\bx^{(\theta)} \vee \OPT) - F(\widehat{\bz} + \bx^{(\theta)}) &\leq \langle \nabla F(\widehat{\bz} + \bx^{(\theta)}), (\one - \bx^{(\theta)}) \circ ((\one - \bp) \wedge \OPT)\rangle &\quad \mbox{(By Claim~\ref{fact3})}\\
	&\leq \langle \nabla F(\bx^{(\theta)}), (\one - \bx^{(\theta)}) \circ ((\one - \bp) \wedge \OPT) \rangle &\quad \mbox{(By Claim~\ref{fact2})}\\
	&= \langle \nabla F(\bx^{(\theta)}) \circ (\one - \bx^{(\theta)}), (\one - \bp) \wedge \OPT \rangle 
\end{align*}
In the first inequality, we have used the fact that $(\bx^{(\theta)} \vee \OPT) - (\widehat{\bz} + \bx^{(\theta)}) = (\bx^{(\theta)} + (\one - \bx^{(\theta)}) \circ \OPT) - (\widehat{\bz} + \bx^{(\theta)}) = (\one - \bx^{(\theta)}) \circ ((\one - \bp) \wedge \OPT)$. 

In order to upper bound $\langle \nabla F(\bx^{(\theta)}) \circ (\one - \bx^{(\theta)}), (\one - \bp) \wedge \OPT \rangle$, we consider the following vector.
	\[ \widehat{\bv} \coloneqq (1 - \alpha) ((\one - \bp) \wedge \OPT) + \alpha \bp.\]
The intuition behind the choice of $\widehat{\bv}$ is to connect the fact that the inner product $\langle \nabla F(\bx^{(\tau)}) \circ (\one - \bx^{(\tau)}), \bv \rangle$ is not much more than $\langle \nabla F(\bx^{(\tau)}) \circ (\one - \bx^{(\tau)}), \alpha \OPT \rangle$ to the insight that  $\langle \nabla F(\bx^{(\tau)}) \circ (\one - \bx^{(\tau)}), (\one - \bp) \wedge \OPT \rangle$ is small.

\iffalse
The intuition behind the choice of $\widehat{\bv}$ is the following. Intuitively, if the inner product $\langle \nabla F(\bx^{(\tau)}) \circ (\one - \bx^{(\tau)}), (\one - \bp) \wedge \OPT \rangle$ were big, then it would be valuable for $\bv$ to put more fractional mass on the coordinates used by $\OPT$ but not $\bp$. The main reason we may not be able to do so is because of the dampening constraint used to compute $\bv$ that restricts $\|\bv\|_{\infty}$ to be at most $\alpha$. But we can simply scale down the mass on those coordinates to ensure that the dampening constraint is satisfied. This gives rise to the vector $\widehat{\bv}$ defined above. In the following, we show that $\|\widehat{\bv}\|_{\infty}$ is at most $\alpha$.
\fi

Since $\alpha \ge 1/2$, we have $1 - \alpha \leq \alpha$. Therefore, for each $i \in [n]$, we have
\begin{equation*}
	{\widehat{\bv}}_i =
	\begin{cases}
		(1 - \alpha)(1 - \bp_i) + \alpha \bp_i \leq \alpha &\quad \text{if $i \in \OPT$}\\
		\alpha \bp_i \leq \alpha &\quad \text{otherwise}
	\end{cases}
\end{equation*}
Therefore $\|\widehat{\bv}\|_{\infty} \leq \alpha$. Additionally, $\widehat{\bv} \in C$, since it is a convex combination of two vectors in $C$ (recall that $C$ is downward closed and convex). It follows from the definition of $\bv^{(\theta)}$ on line 5 that
\begin{align*}
	\langle & \nabla F(\bx^{(\theta)}) \circ (\one - \bx^{(\theta)}), \bv^{(\theta)} \rangle\\
  &\geq \langle \nabla F(\bx^{(\theta)}) \circ (\one - \bx^{(\theta)}), \widehat{\bv} \rangle\\
  &= (1 - \alpha) \langle \nabla F(\bx^{(\theta)}) \circ (\one - \bx^{(\theta)}), (\one - \bp) \wedge \OPT) \rangle +  \alpha \langle \nabla F(\bx^{(\theta)}) \circ (\one - \bx^{(\theta)}), \bp \rangle \\
	&\geq (1 - \alpha) \langle \nabla F(\bx^{(\theta)}) \circ (\one - \bx^{(\theta)}), (\one - \bp) \wedge \OPT) \rangle + \alpha \langle \nabla F(\bx^{(\theta)}) \circ (\one - \bx^{(\theta)}), \OPT \rangle\\
	&\geq (1 - \alpha) \langle \nabla F(\bx^{(\theta)}) \circ (\one - \bx^{(\theta)}), (\one - \bp) \wedge \OPT) \rangle + \alpha (F(\bx^{(\theta)} \vee \OPT) - F(\bx^{(\theta)}))
\end{align*}

By rearranging the inequality above, we obtain
\begin{align*}
	& \langle \nabla F(\bx^{(\theta)}) \circ (\one - \bx^{(\theta)}), (\one - \bp) \wedge \OPT) \rangle\\
	&\quad \leq {1 \over 1 - \alpha} \langle \nabla F(\bx^{(\theta)}) \circ (\one - \bx^{(\theta)}), \bv^{(\theta)} \rangle - {\alpha \over 1 - \alpha}	(F(\bx^{(\theta)} \vee \OPT) - F(\bx^{(\theta)})). 
\end{align*}

By combining all of the inequalities, we obtain
\begin{align*}
	F(\widehat{\bz}) &\geq F(\widehat{\bz} + \bx^{(\theta)}) - F(\bx^{(\theta)})\\
	&\geq F(\bx^{(\theta)} \vee \OPT) - F(\bx^{(\theta)}) - \langle \nabla F(\bx^{(\theta)}) \circ (\one - \bx^{(\theta)}), (\one - \bp) \wedge \OPT) \rangle\\
	&\geq {1 \over 1 - \alpha} (F(\bx^{(\theta)} \vee \OPT) - F(\bx^{(\theta)}) - \langle \nabla F(\bx^{(\theta)}) \circ (\one - \bx^{(\theta)}), \bv^{(\theta)} \rangle)\\
	&\geq {1 \over 1 - \alpha} (e^{- \alpha \theta} F(\OPT) - F(\bx^{(\theta)}) - \langle \nabla F(\bx^{(\theta)}) \circ (\one - \bx^{(\theta)}), \bv^{(\theta)} \rangle) 
\end{align*}
In the last inequality, we have used Corollary~\ref{cor:max-x} and Lemma~\ref{lem:x-or-opt} to lower bound $F(\bx^{(\theta)} \vee \OPT)$ by $e^{- \alpha \theta} F(\OPT)$.

Since $\widehat{\bz}$ is a candidate solution for the Double Greedy step on line 14, it follows from Lemma~\ref{lem:box-constraint} that $F(\bz) \geq {1 \over 2} F(\widehat{\bz})$, and the theorem follows.

\subsection{Combining the two solutions}
\label{sec:combine}

As we have shown above, in every iteration $\theta$ of the algorithm, we obtain two solutions $\by^{(1)}$ and $\bz$ satisfying:
\begin{align*}
	F(\by^{(1)}) &\geq  {1 \over e} \left((1 - \theta) e^{(1 - \alpha) \theta} F(\OPT) + e^{\theta} F(\bx^{(\theta)}) \right)\\
	F(\bz) &\geq {1 \over 2(1 - \alpha)} (e^{-\alpha \theta} F(\OPT) - F(\bx^{(\theta)}) - \langle \nabla F(\bx^{(\theta)}) \circ (\one - \bx^{(\theta)}), \bv^{(\theta)} \rangle) 
\end{align*}

In the following, we show that there is an iteration $\theta$ for which $\max\{F(\by^{(1)}), F(\bz)\} > C \cdot F(\OPT)$, where $C$ is a constant that we will set later. We will proceed by contradiction and assume that $\max\{F(\by^{(1)}), F(\bz)\} \leq C \cdot F(\OPT)$ for all $\theta$. 

Note that the coefficient of $F(\bx^{(\theta)})$ is positive in the first inequality above and it is negative in the second inequality, and there is a trade-off between the two solutions. We can get a handle on this trade-off as follows. To simplify matters, we take a convex combination of the two inequalities and eliminate $F(\bx^{(\theta)})$. Thus we get that, for all $\theta \in [0, 1]$,
	\[ {1 \over {1 \over 2(1 - \alpha)} + e^{\theta - 1}} \left( {(2 - \theta) e^{(1 - \alpha) \theta - 1} \over 2(1 - \alpha)} F(\OPT) - {e^{\theta - 1} \over 2(1 - \alpha)} \langle \nabla F(\bx^{(\theta)}) \circ (\one - \bx^{(\theta)}), \bv^{(\theta)} \rangle \right) \leq C \cdot F(\OPT). \]
Thus, for all $\theta \in [0, 1]$,
	\[ \langle \nabla F(\bx^{(\theta)}) \circ (\one - \bx^{(\theta)}), \bv^{(\theta)} \rangle \geq \Big((2 - \theta) e^{-\alpha \theta} - (e^{1 - \theta} + 2(1 - \alpha))C \Big) F(\OPT).\] 
Now note that
\begin{align*}
	F(\bx^{(\theta)}) &= \int_0^{\theta} \langle \nabla F(\bx^{(t)}) \circ (\one - \bx^{(t)}), \bv^{(t)} \rangle) dt\\
	&\geq \int_0^{\theta} \Big((2 - t) e^{-\alpha t} - (e^{1 - t} + 2(1 - \alpha))C \Big) F(\OPT) dt\\
	&= \left({e^{- \alpha t} (\alpha (t - 2) + 1) \over \alpha^2} - (-e^{1 - t} + 2(1 - \alpha)t) C \right) \Big\rvert_{t = 0}^{\theta} \;\; F(\OPT)\\
	&= \left({e^{- \alpha \theta} (\alpha (\theta - 2) + 1) + 2\alpha - 1 \over \alpha^2} - (-e^{1 - \theta} + 2(1 - \alpha) \theta + e) C \right) F(\OPT)
\end{align*}
Therefore
\iffalse
\begin{align*}
	F(\by^{(1)}) \geq \left( (1 - \theta) e^{(1 - \alpha) \theta - 1} + e^{\theta - 1} \left({e^{- \alpha \theta} (\alpha (\theta - 2) + 1) + 2\alpha - 1 \over \alpha^2} - (-e^{1 - \theta} + 2(1 - \alpha) \theta + e) C \right) \right) F(\OPT)
\end{align*}
\fi
\begin{align*}
	F(\by^{(1)}) \geq \Bigg(& (1 - \theta) e^{(1 - \alpha) \theta - 1} + {e^{(1 - \alpha) \theta - 1} (\alpha (\theta - 2) + 1) + e^{\theta - 1} (2\alpha - 1) \over \alpha^2}\\
	&\quad - (-1 + 2(1 - \alpha) \theta e^{\theta - 1} + e^{\theta}) C \Bigg) F(\OPT)  
\end{align*}

In order to obtain a contradiction, we need that the coefficient of $F(\OPT)$ in the inequality above is at least $C$ for some $\theta \in [0, 1]$ and some $\alpha \in [1/2, 1]$. Equivalently,
\[ \left( (1 - \theta) e^{(1 - \alpha) \theta - 1} + {e^{(1 - \alpha) \theta - 1} (\alpha (\theta - 2) + 1) + e^{\theta - 1} (2\alpha - 1) \over \alpha^2} - (2(1 - \alpha) \theta e^{\theta - 1} + e^{\theta}) C \right) \geq 0.\]
By rearranging, we have
\[ C \leq {1 \over 2(1 - \alpha) \theta e^{\theta - 1} + e^{\theta}} \left( (1 - \theta) e^{(1 - \alpha) \theta - 1} + {e^{(1 - \alpha) \theta - 1} (\alpha (\theta - 2) + 1) + e^{\theta - 1} (2\alpha - 1) \over \alpha^2} \right). \]
Thus, in order to obtain the best approximation $C$, we need to maximize the right hand side of the inequality above over $\theta \in [0, 1]$ and $\alpha \in [1/2, 1]$. Setting $\alpha = 1/2$ and $\theta = 0.18$ gives $C > 0.372$.

%\newpage
\bibliographystyle{plain}
\bibliography{names,submodular}

\newpage
\appendix

\allowdisplaybreaks

\section{Analysis of Algorithm~\ref{alg:double-greedy-box}}
\label{app:double-greedy-box}

In this section, we analyze Algorithm~\ref{alg:double-greedy-box} given in Section~\ref{sec:double-greedy-box}. We start by showing that the problem $\max_{\bu \leq \bx \leq \bv} F(\bx)$ has an optimal solution $\OPT$ with the following property.

\begin{lemma}
\label{lem:opt-box}
	There is an optimal solution $\OPT$ to the problem $\max_{\bu\le\bx\le\bv} F(\bx)$ such that for all $i$, either $\OPT_i = \bu_i$ or $\OPT_i = \bv_i$.
\end{lemma}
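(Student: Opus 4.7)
The plan is to exploit multilinearity: by the very definition of $F$, when we fix all but one coordinate $i$, the function $t \mapsto F(\bx + (t - \bx_i) \one_{\{i\}})$ is affine in $t$. Equivalently, Claim~\ref{fact1} tells us that $\partial F / \partial \bx_i$ is independent of $\bx_i$ (it only depends on the other coordinates of $\bx$), so $F$ is linear in the $i$-th coordinate. An affine function on the closed interval $[\bu_i, \bv_i]$ attains its maximum at one of the two endpoints.

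Starting from any optimal solution $\OPT^{(0)} \in \argmax_{\bu \le \bx \le \bv} F(\bx)$, I would produce the desired optimum by a sweep over coordinates. Inductively, suppose $\OPT^{(i-1)}$ is optimal and has $\OPT^{(j)}_j \in \{\bu_j, \bv_j\}$ for every $j < i$. Consider the univariate affine function $\psi(t) = F(\OPT^{(i-1)} + (t - \OPT^{(i-1)}_i) \one_{\{i\}})$ for $t \in [\bu_i, \bv_i]$. Since $\psi$ is affine, $\max\{\psi(\bu_i), \psi(\bv_i)\} \ge \psi(\OPT^{(i-1)}_i)$. Pick whichever endpoint achieves this maximum (breaking ties arbitrarily) and let $\OPT^{(i)}$ be $\OPT^{(i-1)}$ with its $i$-th coordinate replaced by that endpoint. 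Then $F(\OPT^{(i)}) \ge F(\OPT^{(i-1)})$ and the first $i$ coordinates of $\OPT^{(i)}$ lie in $\{\bu_j, \bv_j\}$. Since coordinates already fixed to endpoints are never revisited, the invariant is preserved, and $\OPT^{(n)}$ is the desired solution.

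There is essentially no obstacle here beyond writing down the affinity carefully; the only mildly delicate point is ensuring we do not spoil the invariant for earlier coordinates, which is handled by the fact that each step only changes coordinate $i$. Optionally, one could streamline the argument by invoking Claim~\ref{fact1} directly to justify the affinity of $\psi$, rather than appealing to the sum-over-subsets formula for $F$.
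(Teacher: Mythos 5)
Your proof is correct. It rests on the same underlying fact as the paper's --- that $F$ is affine in each single coordinate when the others are fixed --- but you deploy it differently. The paper's argument is a one-shot probabilistic one: each $\OPT_i$ is written as a convex combination $\gamma_i \bu_i + (1-\gamma_i)\bv_i$, the coordinates are independently randomized to the endpoints, and multilinearity gives $\Ex_{\bo}[F(\bo)] = F(\OPT)$, so some realization does at least as well. Your argument is the derandomized counterpart: a deterministic coordinate-by-coordinate sweep in which each coordinate is pushed to whichever endpoint does not decrease $F$, justified by the affinity of $\psi(t)$ (equivalently, by Claim~\ref{fact1}, since $\partial F/\partial \bx_i$ does not depend on $\bx_i$). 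This is essentially the method of conditional expectations applied to the paper's averaging argument. The paper's version is shorter to state; yours is constructive and makes explicit that the rounding can be carried out greedily one coordinate at a time, which is also the structure the Double Greedy algorithm itself follows. Either way the invariant maintenance you describe is sound, since each step alters only coordinate $i$ and leaves the already-rounded coordinates untouched.
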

\begin{proof}
	Let $\OPT = \argmax_{\bu \leq \bx \leq \bv} F(\bx)$ be an arbitrary optimal solution. Note that we can write each $\OPT_i$ as a convex combination of $\bu_i$ and $\bv_i$: $\OPT_i = \gamma_i \bu_i + (1-\gamma_i) \bv_i$, where $\gamma_i = (\OPT_i - \bu_i) / (\bv_i - \bu_i)$ if $\bu_i \neq \bv_i$ and $\gamma_i = 1$ otherwise. Let $\bo_i$ be a random variable that is equal to $\bu_i$ with probability $\gamma_i$ and is equal to $\bv_i$ with probability $1-\gamma_i$. Let $\bo$ be the vector whose coordinates are $\bo_i$'s. By the definition of the multilinear extension, we have $F(\OPT) = \Ex_{\bo} [F(\bo)]$. Thus, there exists a realization $\bo = \hat{\bo}$ such that $F(\hat{\bo}) \ge F(\OPT)$. Thus there is an optimal solution such that for all $i$, its $i$th coordinate is either $\bu_i$ or $\bv_i$.
\end{proof}

We will also use the following observation that follows from the definition of the multilinear extension.

\begin{claim}
\label{claim1}
  Let $\bx \in [0, 1]^n$ and $\delta \in [- \bx_i, 1 - \bx_i]$. We have
  \[ F(\bx + \delta \cdot \one_{\{i\}}) - F(\bx) = \delta \cdot (F(\bx \vee \one_{\{i \}}) - F(\bx \wedge \one_{V \setminus \{i\}})).\] 
\end{claim}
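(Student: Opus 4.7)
The plan is to exploit the fact that $F$ is multilinear, i.e., affine in each coordinate $\bx_i$ separately. The key observation I would use is exactly the identity that appeared in the proof of Claim~\ref{fact1}:
\[ F(\bx) = \bx_i \cdot F(\bx \vee \one_{\{i\}}) + (1 - \bx_i) \cdot F(\bx \wedge \one_{V\setminus \{i\}}). \]
This decomposition follows directly from the definition of the multilinear extension by splitting the sum over $S \subseteq V$ according to whether $i \in S$ or not.

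First I would note a small but crucial structural fact about the two vectors appearing in this decomposition: both $\bx \vee \one_{\{i\}}$ and $\bx \wedge \one_{V \setminus \{i\}}$ depend only on the coordinates $\bx_j$ for $j \neq i$; their $i$-th coordinates are pinned to $1$ and $0$ respectively. In particular, perturbing only the $i$-th coordinate of $\bx$ by $\delta$ leaves these two vectors unchanged, so
\[ (\bx + \delta \one_{\{i\}}) \vee \one_{\{i\}} = \bx \vee \one_{\{i\}}, \qquad (\bx + \delta \one_{\{i\}}) \wedge \one_{V \setminus \{i\}} = \bx \wedge \one_{V \setminus \{i\}}. \]

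Then I would apply the decomposition identity to the perturbed point $\bx + \delta \one_{\{i\}}$, whose $i$-th coordinate is $\bx_i + \delta \in [0,1]$ by the assumption on $\delta$, obtaining
\[ F(\bx + \delta \one_{\{i\}}) = (\bx_i + \delta) F(\bx \vee \one_{\{i\}}) + (1 - \bx_i - \delta) F(\bx \wedge \one_{V \setminus \{i\}}). \]
Subtracting the analogous expression for $F(\bx)$ the terms $\bx_i F(\bx \vee \one_{\{i\}})$ and $(1 - \bx_i) F(\bx \wedge \one_{V \setminus \{i\}})$ cancel, leaving exactly $\delta \cdot (F(\bx \vee \one_{\{i\}}) - F(\bx \wedge \one_{V \setminus \{i\}}))$, which is the desired equality.

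There is essentially no obstacle here; the claim is a direct consequence of multilinearity and the only mildly non-routine step is recognizing that the two vectors $\bx \vee \one_{\{i\}}$ and $\bx \wedge \one_{V \setminus \{i\}}$ are invariant under perturbations of $\bx_i$. Equivalently, one could present the proof as a one-line application of Claim~\ref{fact1} together with the fact that $\partial F / \partial \bx_i$ does not depend on $\bx_i$ itself (again by multilinearity), so $F(\bx + \delta \one_{\{i\}}) - F(\bx) = \delta \cdot (\partial F/\partial \bx_i)(\bx)$.
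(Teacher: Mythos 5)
Your proof is correct and is essentially identical to the paper's: both apply the decomposition $F(\by) = \by_i F(\by \vee \one_{\{i\}}) + (1-\by_i) F(\by \wedge \one_{V\setminus\{i\}})$ at $\bx$ and at $\bx + \delta\one_{\{i\}}$ and subtract, using (as you make explicit) that the two pinned vectors are unaffected by the perturbation of the $i$-th coordinate. No issues.
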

\begin{proof}
  Note that, for each $\by \in [0, 1]^n$ and each $j \in [n]$,
    \[ F(\by) = \by_j \cdot F(\by \vee \one_{\{j\}}) + (1 - \by_j) \cdot F(\by \wedge \one_{V\setminus \{j\}}).\]
  Therefore
  \begin{align*}
    F(\bx + \delta \cdot \one_{\{i\}}) &= (\bx_i + \delta) \cdot F(\bx \vee \one_{\{i\}}) + (1 - \bx_i - \delta) \cdot F(\bx \wedge \one_{V\setminus \{i\}}),\\
    F(\bx) &= \bx_i \cdot F(\bx \vee \one_{\{i\}}) + (1 - \bx_i) \cdot F(\bx \wedge \one_{V\setminus \{i\}}),
  \end{align*}
  and the claim follows.
\end{proof}

\begin{proofof}{Lemma~\ref{lem:box-constraint}}
	Let $\OPT = \argmax_{\bu \le \bx \le \bv} F(\bx)$. By Lemma~\ref{lem:opt-box}, we may assume that for each $i$, either $\OPT_i = \bu_i$ or $\OPT_i = \bv_i$. Let $\OPT^{(i)} = (\OPT \vee \bu^{(i)})\wedge \bv^{(i)}$. Note that $\bu^{(i)} \le \OPT^{(i)} \le \bv^{(i)}$ and therefore $\nabla F(\bu^{(i)}) \ge \nabla F(\OPT^{(i)}) \ge \nabla F(\bv^{(i)})$ by Fact~\ref{fact2}.

	We will show that
	\begin{equation} \label{eq:eq2}
		F(\OPT^{(i-1)}) - F(\OPT^{(i)}) \le \frac{1}{2}\left(F(\bu^{(i)}) - F(\bu^{(i-1)}) + F(\bv^{(i)}) - F(\bv^{(i-1)})\right).
	\end{equation}
  Note that (\ref{eq:eq2}) immediately implies the lemma. We prove (\ref{eq:eq2}) in the following.

	Suppose that $a'_i + b'_i \neq 0$. By Claim~\ref{claim1},
  \[
		F(\bu^{(i)}) - F(\bu^{(i-1)}) = \frac{a_i'}{a_i'+b_i'} (\bv_i - \bu_i) \( F(\bu^{(i-1)} \vee \one_{\{i\}}) - F(\bu^{(i-1)} \wedge \one_{V \setminus \{i\}})\) = \frac{(a_i')^2}{a_i' + b_i'}.
	\]
	and
	\[
		F(\bv^{(i)}) - F(\bv^{(i-1)}) = \frac{b_i'}{a_i'+b_i'} (\bv_i - \bu_i) \( - F(\bv^{(i-1)} \vee \one_{\{i\}}) + F(\bv^{(i-1)} \wedge \one_{V \setminus \{i\}})\) = \frac{(b_i')^2}{a_i' + b_i'}.
	\]
	Now recall that we have either $\OPT_i = \bu_i$ or $\OPT_i = \bv_i$. If $\OPT_i = \bu_i$, we have
	\begin{align*}
		F(\OPT^{(i-1)}) - F(\OPT^{(i)}) &=\frac{a_i'}{a_i'+b_i'} (\bv_i - \bu_i) \( - F(\OPT^{(i-1)} \vee \one_{\{i\}}) + F(\OPT^{(i-1)} \wedge \one_{(V\setminus\{i\})})\)\\
		&= -\frac{a_i'}{a_i'+b_i'} (\bv_i - \bu_i) \frac{\partial F}{\partial x_i}(\OPT^{(i-1)})\\
		&\le -\frac{a_i'}{a_i'+b_i'} (\bv_i - \bu_i) \frac{\partial F}{\partial x_i}(\bv^{(i-1)})\\
    &= \frac{a_i' b_i}{a_i'+b_i'}\\
		&\le \frac{a_i' b'_i}{a_i' + b_i'}
	\end{align*}
  On the first two lines, we have used Claim~\ref{claim1} and Fact~\ref{fact1}. On the third line, we have used the fact that $\nabla F(\OPT^{(i - 1)}) \geq \nabla F(\bv^{(i - 1)})$. On the fourth and fifth lines, we have used the definition of $b_i$ and $b'_i$.

  If $\OPT_i = \bv_i$, we use an analogous argument.
	\begin{align*}
		F(\OPT^{(i-1)}) - F(\OPT^{(i)}) &= \frac{b_i'}{a_i'+b_i'} (\bv_i - \bu_i) (F(\OPT^{(i-1)} \vee \one_{\{i\}}) - F(\OPT^{(i-1)} \wedge \one_{V\setminus \{i\}}))\\
		&= \frac{b_i'}{a_i'+b_i'} (\bv_i - \bu_i) \frac{\partial F}{\partial x_i}(\OPT^{(i-1)})\\
		&\le \frac{b_i'}{a_i'+b_i'} (\bv_i - \bu_i) \frac{\partial F}{\partial x_i}(\bu^{(i-1)})\\
		&\le \frac{a_i' b_i'}{a_i' + b_i'}
	\end{align*}
  Since $2a'_ib'_i \leq (a'_i)^2 + (b'_i)^2$, the inequality (\ref{eq:eq2}) follows.

  Finally, suppose that $a'_i + b'_i = 0$. Notice that $a_i \ge -b_i$ by submodularity so this case can only happen if $a_i = b_i = 0$. In this case, we can set the $i$th coordinate of $\bu$ and $\bv$ to an arbitrary common value and by the same argument as above, we have $F(\bu^{(i)}) = F(\bu^{(i-1)}), F(\bv^{(i)}) = F(\bv^{(i-1)})$, and $F(\OPT^{(i)}) = F(\OPT^{(i-1)})$. Therefore (\ref{eq:eq2}) is trivially satisfied.
\end{proofof}

\section{Discretized algorithm}
\label{app:discretized}

\begin{figure}[t!]
\centering
\begin{minipage}{0.7\linewidth}
\begin{algorithm}[H]
\label{alg:main-discretized}
\caption{Discretized algorithm for a general constraint}
Initialize $\bx^* = 0$ \;
$\delta = n^{-4}$ \;
\For{$\theta \in \{0, \delta, 2\delta, 3\delta, \ldots, 1\}$}{
  \tcp{Dampened Continuous Greedy stage}
  Initialize $\bx^{(0)} = \zero$ \;
  $t = 0$\;
  \While{$t < \theta$}{ 
    $\bv^{(t)} = \argmax_{\bc \in C, \textcolor{red}{\|\bc\|_{\infty} \le \alpha}} \langle \nabla F(\bx^{(t)}) \circ (\one - \bx^{(t)}), \bc \rangle$ \;
    $\bx^{(t + \delta)} = \bx^{(t)} + \delta \bv^{(t)} \circ (\one - \bx^{(t)})$ \;
    $t = t + \delta$\;
  }
  \;
  \tcp{Standard Continuous Greedy stage}
  Initialize $\by^{(\theta)} = \bx^{(\theta)}$ \;
  \While{$t < 1$}{ 
    $\bv^{(t)} = \argmax_{\bc \in C} \langle \nabla F(\by^{(t)}) \circ (\one - \by^{(t)}), \bc \rangle$ \;
    $\by^{(t + \delta)} = \by^{(t)} + \delta \bv^{(t)} \circ (\one - \by^{(t)})$ \;
    $t = t + \delta$\;
  }
  \;
  \tcp{Double Greedy stage}
  $\bp = \argmax_{\bc \in C} \langle \nabla F(\bx^{(\theta)}) \circ (\one - \bx^{(\theta)}), \bc \rangle$ \;
  Find $\bz$ approximating $\argmax_{\bc \le \bp} F(\bc)$ using the Double Greedy algorithm (Algorithm~\ref{alg:double-greedy-box})\;
  \;
  \tcp{Update the best solution}
  \If{$F(\by^{(1)}) > F(\bx^*)$}{
		$\bx^* = \by^{(1)}$ \;
  }
	\If{$F(\bz) > F(\bx^*)$}{
		$\bx^* = \bz$ \;
	}
}
Return $\bx^*$ \;
\end{algorithm}
\end{minipage}
\caption{Discretized algorithm for a general constraint}
\label{fig:alg-discretized}
\end{figure}

In Figure~\ref{fig:alg-discretized}, we give a discretized version of Algorithm~\ref{alg:main}. In the remainder of this section, we show how to modify the analysis from Section~\ref{sec:analysis}.

We discretize the time interval $[0,1]$ into segments of size $\delta = n^{-4}$.

\paragraph{Analysis of the solution $\by^{(1)}$.}
We modify the analysis from Subsection~\ref{sec:y-value} as follows. (Note that the rest of the analysis remains unchanged.) 

Consider an iteration $\theta$ of the outer for loop of Algorithm~\ref{alg:main-discretized}. In the remainder of this section, all of the vectors $\bx^{(\cdot)}$, $\by^{(\cdot)}$, etc. refer to the vectors during iteration $\theta$. All of the time steps are implicitly assumed to be the discrete time steps $\{0, \delta, 2\delta, 3\delta, \ldots, 1\}$.

First we prove bounds on $\|\bx^{(t)}\|_\infty$ and $\|\by^{(t)}\|_\infty$ with a similar argument to Lemma~\ref{lem:max-norm}. Consider a fixed $i \in [n]$. Since $\bv_i^{(t)} \le \alpha$ for all $t < \theta$, we have $1 - \bx_i^{(t + \delta)} \ge (1 - \delta\alpha)(1 - \bx_i^{(t)})$ for all  $t < \theta$. Thus,
\begin{align*}
	1 - \bx_i^{(t)} &\ge (1 - \delta\alpha)^{t/\delta}.
\end{align*}
Similarly, for all $t \in [\theta, 1]$ and $t$ a multiple of $\delta$, 
	\[ 1 - \by_i^{(t)} \ge (1 - \delta\alpha)^{\theta/\delta}(1-\delta)^{(t - \theta)/\delta}. \]

Next we prove a lower bound on $F(\by^{(1)})$ with a similar argument to Theorem~\ref{thm:y-value}.

\begin{lemma}
	Consider $\bu, \bv$ satisfying $|\bu_i - \bv_i| \le \delta$. Then
		\[ F(\bv) - F(\bu) \le \delta n^2 M, \]
	where $M = \max_{i \in [n]} f(\{i\})$.
\end{lemma}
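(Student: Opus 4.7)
The plan is to move from $\bu$ to $\bv$ one coordinate at a time and exploit the multilinearity of $F$. Define intermediate vectors $\bw^{(0)} = \bu$, and for $i = 1, \ldots, n$ let $\bw^{(i)}$ agree with $\bw^{(i-1)}$ on every coordinate except the $i$-th, where its value is set to $\bv_i$; then $\bw^{(n)} = \bv$ and we can telescope
\[ F(\bv) - F(\bu) = \sum_{i=1}^{n} \left( F(\bw^{(i)}) - F(\bw^{(i-1)}) \right). \]
Applying Claim~\ref{claim1} (the shift in coordinate $i$ from $\bu_i$ to $\bv_i$ has magnitude at most $\delta$ and stays within $[0,1]$) yields, for each $i$,
\[ F(\bw^{(i)}) - F(\bw^{(i-1)}) = (\bv_i - \bu_i)\cdot\left( F(\bw^{(i-1)} \vee \one_{\{i\}}) - F(\bw^{(i-1)} \wedge \one_{V\setminus\{i\}}) \right). \]

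The only substantive step is bounding the parenthesized marginal. I would argue that $F$ takes values in $[0, nM]$ on $[0,1]^V$. Since $F(\bx) = \Ex[f(R(\bx))]$, it suffices to check $0 \le f(S) \le nM$ for every $S \subseteq V$. Non-negativity is assumed; for the upper bound, submodularity together with non-negativity of $f$ gives the subadditivity $f(S \cup T) \le f(S) + f(T)$ (from $f(S \cup T) + f(S \cap T) \le f(S) + f(T)$ and $f(S \cap T) \ge 0$), and a straightforward induction then gives $f(S) \le \sum_{j \in S} f(\{j\}) \le |S|\cdot M \le nM$. Consequently each parenthesized marginal lies in $[-nM, nM]$.

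Combining the two ingredients, $|F(\bw^{(i)}) - F(\bw^{(i-1)})| \le \delta \cdot nM$ for every $i$, and summing the telescoping expansion over the $n$ coordinates produces the claimed bound $F(\bv) - F(\bu) \le \delta n^2 M$. There is no real obstacle here; the bound is deliberately loose (a sharper analysis using the submodular upper bound $\partial F / \partial x_i \le M$ would even give $\delta n M$), but the stated $\delta n^2 M$ is all that is needed to absorb the $O(\delta)$ discretization error when porting the continuous analysis of Sections~\ref{sec:y-value} and~\ref{sec:z-value} to Algorithm~\ref{alg:main-discretized}, since $\delta = n^{-4}$ keeps $\delta n^2 M$ polynomially small in $1/n$.
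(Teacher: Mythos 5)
Your proof is correct and matches the paper's argument: both telescope over the coordinates and bound each single-coordinate change by $\delta\, nM$ using $\max_{S\subseteq V} f(S) \le nM$ --- you via Claim~\ref{claim1} together with the range $F \in [0, nM]$, the paper via a coupling of $R(\bu)$ and $R(\bv)$, which is the same estimate in different clothing. One small nit: your subadditivity induction says nothing about $S=\emptyset$ (it would read $f(\emptyset)\le 0$), but submodularity applied to two singletons gives $f(\emptyset)\le 2M\le nM$, so the bound you actually need still holds.
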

\begin{proof}
	First consider the case $\bu, \bv$ agree on all but coordinate $i_0$. Let $R(\bx)$ be the random set where each element $i$ is independently included with probability $\bx_i$. This process can be thought of as picking a random $r_i \in [0,1]$ and including $i$ if $r_i \le \bx_i$. One can generate (coupled) $R(\bu)$ and $R(\bv)$ by sharing the same $r_i$'s. Notice that $R(\bu)$ and $R(\bv)$ agree on all coordinates other than $i_0$ and they disagree on coordinate $i_0$ with probability at most $\delta$. Thus we have 
	\[ \Ex[f(R(\bu)) - f(R(\bv))] \le \delta \Ex[f(R(\bu)) \;|\; R(\bu) \ne R(\bv)] \le \delta \max_{S \subseteq V} f(S) \le \delta n M. \]
	Next, let $\bu^i$ be the vector whose first $i$ coordinates agree with $\bu$ and the last $n-i$ coordinates agree with $\bv$. We have
	\[ F(\bv) - F(\bu) = \sum_{i=0}^{n-1} (F(\bu^i) - F(\bu^{i+1})) \le \delta n^2 M, \]
	where the inequality comes from the above special case.
\end{proof}
\begin{corollary}
	Consider $\bu$ and $\bv$ satisfying $\bu \leq \bv$ and $\bv_i \leq \bu_i + \delta$. Then
		\[ \frac{\partial F}{\partial x_i}(\bu) - \frac{\partial F}{\partial x_i}(\bv) \le 2\delta n^2 M. \]
\end{corollary}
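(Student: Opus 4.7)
The plan is to apply Claim~\ref{fact1} to rewrite each partial derivative as a difference of $F$-values, and then invoke the preceding lemma (which is essentially a Lipschitz-type statement for $F$ with respect to coordinate-wise perturbations) on each of the two resulting pairs.

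Concretely, I would first use Claim~\ref{fact1} to expand
\[
\frac{\partial F}{\partial x_i}(\bu) - \frac{\partial F}{\partial x_i}(\bv)
= \bigl(F(\bu \vee \one_{\{i\}}) - F(\bv \vee \one_{\{i\}})\bigr) + \bigl(F(\bv \wedge \one_{V\setminus\{i\}}) - F(\bu \wedge \one_{V\setminus\{i\}})\bigr),
\]
so that it suffices to bound each parenthesized term separately by $\delta n^2 M$.

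For each pair, I would check that the two vectors satisfy the coordinate-wise hypothesis of the preceding lemma: on coordinate $i$ they agree (both are pinned to $1$ in the first pair and both to $0$ in the second), and on every other coordinate $j$ they inherit the bound $|\bu_j - \bv_j| \le \delta$ directly from the corollary's assumption $\bu \le \bv \le \bu + \delta\mathbf{1}$. The preceding lemma therefore applies to each pair (using it in the appropriate direction, since the lemma's conclusion is symmetric in its two arguments after swapping), yielding an upper bound of $\delta n^2 M$ per term and thus the claimed $2\delta n^2 M$ overall.

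I do not anticipate any substantive obstacle: the statement is a mechanical consequence of Claim~\ref{fact1} and the Lipschitz bound just proved. The only point that deserves a line of care is confirming that taking $\vee\, \one_{\{i\}}$ or $\wedge\, \one_{V\setminus\{i\}}$ on both sides pins coordinate $i$ to a common value while leaving every other coordinate-wise gap unchanged, so that the hypothesis of the preceding lemma is genuinely preserved.
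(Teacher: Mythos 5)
Your proposal is correct and matches the paper's proof: the paper likewise expands both partial derivatives via Claim~\ref{fact1} and bounds the resulting four-term expression by applying the preceding Lipschitz lemma to the two pairs, giving $2\delta n^2 M$. Your extra care about coordinate $i$ being pinned to a common value and the symmetry of the lemma's hypothesis is exactly the (implicit) justification the paper relies on.
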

\begin{proof}
	By Claim~\ref{fact1},
		\[ \frac{\partial F}{\partial x_i}(\bu) - \frac{\partial F}{\partial x_i}(\bv) = F(\bu \vee \one_{\{i\}}) - F(\bv\vee \one_{\{i\}}) - F(\bu \wedge \one_{V\setminus\{i\}}) + F(\bv \wedge \one_{V\setminus\{i\}})\le 2\delta n^2 M. \]
\end{proof}

For any given $t \in [\theta, 1 - \delta]$, we have
\begin{align*}
	& F(\by^{(t + \delta)}) - F(\by^{(t)})\\
	&= \int_{0}^{1} \langle \nabla F((1-z) \by^{(t)} + z \by^{(t+\delta)}), \by^{(t+\delta)} - \by^{(t)} \rangle dz\\
	&= \langle \nabla F(\by^{(t)}), \by^{(t+\delta)} - \by^{(t)} \rangle + \int_{0}^{1} \langle \nabla F((1-z) \by^{(t)} + z \by^{(t+\delta)}) - \nabla F(\by^{(t)}), \by^{(t+\delta)} - \by^{(t)} \rangle dz\\
	&\ge \langle \nabla F(\by^{(t)}), \by^{(t+\delta)} - \by^{(t)} \rangle - \int_{0}^{1} \|\nabla F((1-z)\by^{(t)} + z\by^{(t+\delta)}) - \nabla F(\by^{(t)})\|_\infty \|\by^{(t+\delta)} - \by^{(t)} \|_1 dz\\
	&\ge \langle \nabla F(\by^{(t)}), \by^{(t+\delta)} - \by^{(t)} \rangle - \delta^2 n^3 M\\
	&= \langle \nabla F(\by^{(t)}), \delta \bv^{(t)} \circ (\one - \by^{(t)})\rangle - \delta^2 n^3 M\\
	&= \langle \nabla F(\by^{(t)}) \circ (\one - \by^{(t)}), \delta \bv^{(t)} \rangle - \delta^2 n^3 M\\
	&\geq \langle \nabla F(\by^{(t)}) \circ (\one - \by^{(t)}), \delta \OPT \rangle - \delta^2 n^3 M\\
	&\geq \delta (F(\by^{(t)} \vee \OPT) - F(\by^{(t)})) - \delta^2 n^3 M\\
	&\geq \delta (1-\delta\alpha)^{\theta/\delta}(1-\delta)^{(t-\theta)/\delta} \cdot F(\OPT) - \delta\cdot F(\by^{(t)}) - \delta^2 n^3 M \quad \mbox{(By Lemma~\ref{lem:x-or-opt})}
\end{align*}

Therefore we have the following recurrence for $F(\by^{(t)})$:
\[ F(\by^{(t + \delta)}) \geq (1 - \delta) F(\by^{(t)}) + \delta (1-\delta\alpha)^{\theta/\delta}(1-\delta)^{(t-\theta)/\delta} \cdot F(\OPT) - \delta^2 n^3 M.\] 
By expanding the recurrence, we obtain
\[ F(\by^{(t)}) \geq (1 - \delta)^{(t - \theta)/\delta} \left((t - \theta) (1 - \delta \alpha)^{\theta/\delta} \cdot F(\OPT) + F(\bx^{(\theta)})\right) - t\delta n^3 M. \]

By substituting $t=1$ and $\delta=n^{-4}$ and by using the inequalities $e^{-x/(1-x)} \le 1-x \le e^{-x}$ for all $0<x<1$, we obtain
\begin{align*}
	F(\by^{(1)}) &\geq e^{-1 + \theta} \left((1 - \theta) e^{-\alpha \theta} F(\OPT) + F(\bx^{(\theta)}) \right) - O(n^{-1}) F(\OPT)\\
	&= {1 \over e} \left((1 - \theta) e^{(1 -\alpha) \theta} F(\OPT) + e^{\theta} F(\bx^{(\theta)}) \right) - O(n^{-1}) F(\OPT).
\end{align*}

\paragraph{Combining the two solutions $\by^{(1)}$ and $\bz$.}
We extend the argument in Subsection~\ref{sec:combine} as follows. As we have shown above, in every iteration $\theta$ that is a multiple of $\delta$, we obtain two solutions $\by^{(1)}$ and $\bz$ satisfying:
\begin{align*}
	F(\by^{(1)}) &\geq  {1 \over e} \left((1 - \theta) e^{(1 - \alpha) \theta} F(\OPT) + e^{\theta} F(\bx^{(\theta)}) \right) - O(n^{-1}) F(\OPT)\\
	F(\bz) &\geq {1 \over 2(1 - \alpha)} (e^{-\alpha \theta} F(\OPT) - F(\bx^{(\theta)}) - \langle \nabla F(\bx^{(\theta)}) \circ (\one - \bx^{(\theta)}), \bv^{(\theta)} \rangle) 
\end{align*}

As before, we show that there is an iteration $\theta$ for which $\max\{F(\by^{(1)}), F(\bz)\} > C \cdot F(\OPT)$, where $C$ is a constant that we will set later. We will proceed by contradiction and assume that $\max\{F(\by^{(1)}), F(\bz)\} \leq C \cdot F(\OPT)$ for all $\theta$. 

We take a convex combination of the two inequalities and eliminate $F(\bx^{(\theta)})$. Thus we get that, for all $\theta$,
	\[ {1 \over {1 \over 2(1 - \alpha)} + e^{\theta - 1}} \left( {(2 - \theta) e^{(1 - \alpha) \theta - 1} \over 2(1 - \alpha)} F(\OPT) - {e^{\theta - 1} \over 2(1 - \alpha)} \langle \nabla F(\bx^{(\theta)}) \circ (\one - \bx^{(\theta)}), \bv^{(\theta)} \rangle) \right) \leq C' \cdot F(\OPT), \]
where $C' = C + O(n^{-1})$. Thus, for all $\theta \in [0, 1]$,
	\[ \langle \nabla F(\bx^{(\theta)}) \circ (\one - \bx^{(\theta)}), \bv^{(\theta)} \rangle) \geq \Big((2 - \theta) e^{-\alpha \theta} - (e^{1 - \theta} + 2(1 - \alpha))C' \Big) F(\OPT).\] 
For any given $t \in [0, \theta)$, we have
\begin{align*}
	& F(\bx^{(t + \delta)}) - F(\bx^{(t)}) \\
	&= \int_{0}^{1} \langle \nabla F((1-z) \bx^{(t)} + z \bx^{(t+\delta)}), \bx^{(t+\delta)} - \bx^{(t)} \rangle dz\\
	&= \langle \nabla F(\bx^{(t)}), \bx^{(t+\delta)} - \bx^{(t)} \rangle + \int_{0}^{1} \langle \nabla F((1-z) \bx^{(t)} + z \bx^{(t+\delta)}) - \nabla F(\bx^{(t)}), \bx^{(t+\delta)} - \bx^{(t)} \rangle dz\\
	&\ge \langle \nabla F(\bx^{(t)}), \bx^{(t+\delta)} - \bx^{(t)} \rangle - \int_{0}^{1} \|\nabla F((1-z)\bx^{(t)} + z\bx^{(t+\delta)}) - \nabla F(\bx^{(t)})\|_\infty \|\bx^{(t+\delta)} - \bx^{(t)} \|_1 dz\\
	&\ge \langle \nabla F(\bx^{(t)}), \bx^{(t+\delta)} - \bx^{(t)} \rangle - \delta^2 n^3 M\\
	&= \langle \nabla F(\bx^{(t)}), \delta \bv^{(t)} \circ (\one - \bx^{(t)})\rangle - \delta^2 n^3 M\\
	&= \delta \langle \nabla F(\bx^{(t)}) \circ (\one - \bx^{(t)}), \delta \bv^{(t)} \rangle - \delta^2 n^3 M\\
	&\geq \delta \Big((2 - t) e^{-\alpha t} - (e^{1 - t} + 2(1 - \alpha))C' \Big) F(\OPT) - \delta^2 n^3 M
\end{align*}
By expanding the recurrence, we obtain
\begin{equation}\label{eq:fx}
	F(\bx^{(t + \delta)}) \geq \sum_{i = 0}^{t/\delta} \left( \delta \Big((2 - t + i \delta) e^{-\alpha (t - i \delta)} - (e^{1 - t + i\delta} + 2(1 - \alpha))C' \Big) F(\OPT) - \delta^2 n^3 M \right)
\end{equation}

By computation, we have
\begin{align*}
\sum_{i = 0}^{t/\delta} \delta (2 - t + i \delta) e^{-\alpha (t - i \delta)}&=\delta e^{-\alpha t} \left((2-t-\delta)\frac{e^{\alpha\delta(t/\delta+1)}-1}{e^{\alpha\delta}-1} + \delta\frac{(t/\delta+1)e^{\alpha\delta(t/\delta+2)}-(t/\delta+2)e^{\alpha\delta(t/\delta+1)}+1}{(e^{\alpha\delta}-1)^2}\right)\\
&\ge e^{-\alpha t} \left(\frac{(2-t)(e^{\alpha t}-1)}{\alpha} + \frac{t\alpha e^{\alpha t} -e^{\alpha t} + 1}{\alpha^2}\right) - O(\delta)\\
&= \frac{(t-2)\alpha e^{-\alpha t} + 2\alpha -1 + e^{-\alpha t}}{\alpha^2} - O(\delta)
\end{align*}
We also have
\begin{align*}
\sum_{i = 0}^{t/\delta} \delta(e^{1 - t + i\delta} + 2(1 - \alpha))C' &= \left(\delta e^{1-t} \frac{e^{t+\delta}-1}{e^{\delta}-1} + 2(1-\alpha)(t+\delta)\right)C'\\
&\le \left(e - e^{1-t}+ 2(1-\alpha)(t+\delta)\right)C' + O(\delta) 
\end{align*}

Substituting into Equation~\ref{eq:fx}, we have
\begin{equation*}
F(\bx^{(t + \delta)}) \geq \left(\frac{(t-2)\alpha e^{-\alpha t} + 2\alpha -1 + e^{-\alpha t}}{\alpha^2} - (e - e^{1-t}+ 2(1-\alpha)t)C' - O(\delta n^3)\right)F(\OPT)
\end{equation*}

Notice that the RHS is exactly the same as in the continuous argument except for an additive error $O(\delta n^3)F(\OPT)$. By carrying this error through, we can finish the proof exactly the same as before.
\end{document}